\documentclass[letterpaper, 10pt, conference]{ieeeconf}  

\IEEEoverridecommandlockouts                              
\usepackage{graphics} 
\usepackage{epsfig} 
\usepackage{amsmath} 
\usepackage{amssymb}  
\usepackage{cite}
\usepackage{comment}
\usepackage{xcolor}

\let\labelindent\relax
\usepackage{enumitem}

\usepackage{amsthm}

\renewenvironment{proof}
{\par\noindent\hspace{2em}\textit{Proof:}\ }
{\hspace*{\fill}$\square$\par}

\usepackage{float}

\newtheorem{theorem}{Theorem}

\newtheorem{lemma}{Lemma}

\newtheorem{corollary}{Corollary}

\title{\LARGE \bf Observability Blocking in a Linear Synchronization Network with Partial State Measurements}

\author{Alexis Moreno$^{1}$ and Abdullah Al Maruf$^{2}$
     \thanks{$^1$California State University, Los Angeles (Cal State LA).
        {\tt\small amoren133@calstatela.edu@calstatela.edu}}
     \thanks{$^2$California State University, Los Angeles (Cal State LA).
        {\tt\small amaruf@calstatela.edu}}	
	}

\begin{document}

\maketitle

\begin{abstract}

Large-scale networked systems are increasingly vulnerable to adversaries that can infer system dynamics from a small set of compromised nodes. This paper addresses the problem of blocking such inference using limited state information. While existing state-feedback methods achieve observability blocking with eigenvalue preservation, they require full state access and are impractical for large networks. We propose multiple control strategies that operate under partial state measurements. The first approach employs output feedback to achieve observability blocking while preserving a subset of open-loop eigenvalues. The second approach leverages an observer to reconstruct the system state and enables full-state feedback control, preserving all eigenvalues and providing greater flexibility in sensor placement. We further extend the observer-based design to a distributed framework. Numerical examples demonstrate the scope and validity of the proposed methods.


\end{abstract}


\section{Introduction}

The controllability and observability of dynamical networks have been widely studied in control systems literature \cite{rahmani2009controllability,pasqualetti2014controllability,summers2015submodularity,roy2016sensor,li2020structural}. More recently, attention has shifted toward the design of control strategies that shape these properties while preserving overall network performance  \cite{roy2016sensor,li2020structural,paridari2017framework}. This problem becomes particularly important in settings where multiple agents, including adversaries, have access to the network dynamics and may pursue conflicting objectives. For instance, growing concerns about cyber-attacks in power systems have motivated the development of wide-area control schemes that prevent adversaries from estimating or manipulating system dynamics \cite{sridhar2011cyber}. Similarly, multi-vehicle systems must be designed to remain secure against intruders that probe a subset of agents to infer the global state \cite{xue2014security}. In this context, observability-blocking has emerged as a promising approach for enhancing privacy and security in dynamical networks \cite{zhang2023observability,zhang2022privacy}.

Based on this motivation, our previous work demonstrated that state-feedback controllers can successfully block observability at selected nodes of a network by modifying only a few closed-loop eigenvectors while preserving the remaining eigenstructure \cite{al2022observability,tran2025observability,anguluri2025mode,al2019observability}. However, a key limitation of these approaches is the assumption that the operator has access to the full system state. In large-scale networks, sensing is inherently distributed, such that only a small subset of nodes can be measured, and communication constraints prevent continuous access to state information of the entire network \cite{zhang2017distributed}. These physical and operational constraints make the classical state-feedback-based solution approach impractical and motivate the development of solutions that operate using only partial measurements available to the operator. Our aim in this study is to address this gap. By combining graph-theoretic insights with linear-algebraic techniques, we develop output-feedback and observer-based control schemes that preserve desired dynamical properties while preventing reconstruction of critical states at vulnerable locations by an adversary or unauthorized stakeholder.

The main contributions of the study are summarized as follows:

\begin{enumerate}
    \item We provide an output-feedback control scheme, presented in a general algorithm, that blocks the adversary’s observability at a set of compromised nodes while relying only on partial measurements collected at designated sensor nodes. We discuss the feasibility and scope of this approach.
    
    \item We present an alternative design approach that employs a state observer to estimate the system's states using the measurements from the sensor nodes, together with a state-feedback control scheme that uses the estimated state. We show that this approach provides greater flexibility in the placement of sensor nodes and can preserve all open-loop eigenvalues. 

    \item We also present a distributed framework for observability blocking by extending the observer-based approach. We detail constructions of the observer and controller.

    \item Simulation case studies are provided to illustrate the design procedures and validate the theoretical results.

    \end{enumerate}

This paper is organized as follows. Section II defines the
network model and problem formulation. Section III presents
our main results, and Section IV showcases a simulation study.
Section V concludes the paper.

\section{Network Model and Problem Formulation}

In this study, we consider the standard model of a linear synchronization network consisting of $n$ nodes. The nodes are interconnected
through edges of certain weights that represent how strongly one node influences another. The network can be described as a weighted directed digraph $\mathcal{G}(\mathcal{V},\mathcal{E};\mathcal{W})$, where $\mathcal{V} = \{1, 2,\ldots,n\}$ is the vertex set, $\mathcal{E} \subseteq \mathcal{V}\times\mathcal{V}$ is the set of directed edges, and $\mathcal{W} = \{ w_{ik} > 0 : (i,k)\in\mathcal{E} \}$ contains positive edge weights. We define the neighbor nodes of $i$ as $\mathcal{N}(i)=\{k:(k,i) \in\mathcal{E}\}$. Each node $i$ has a scalar state $x_i (t) \in \mathbb{R}$ that evolves in continuous time. 

Within the network, we distinguish three subsets of nodes: (i) $q$ actuation nodes, indexed by $\{r_1, r_2, \hdots, r_q\}$, where the system operator can apply control inputs;  (ii) $h$ sensor nodes, indexed by $\{z_1, z_2, \hdots, z_h\}$ whose states can be accessed by the operator; and (iii) $m$ compromised nodes, indexed by $\{s_1, s_2, \hdots, s_m\}$, whose states can be accessed by an adversary. The set of sensor nodes may overlap with the sets of actuation nodes and compromised nodes.


For convenience in presenting the synchronization dynamics, we define the Laplacian matrix $\mathbf{L} \in \mathbb{R}^{n \times n}$ associated with the graph $\mathcal{G}(\mathcal{V},\mathcal{E};\mathcal{W})$. For each pair of vertices $(i,k)$, the off–diagonal entry of $\mathbf{L}$ is $L_{ki} = -w_{ik}$ if $(i,k)\in\mathcal{E}$ and $L_{ki}=0$ otherwise. The diagonal entries are given by $L_{kk} = -\displaystyle\sum_{\substack{i =1, i\neq k}}^{n} L_{ki}$. 
The system dynamics can now be described by the following model:
\begin{align}
\dot{\mathbf{x}}(t) &= -\mathbf{L}\mathbf{x}(t) + \mathbf{B}\mathbf{u}(t), \label{eq:dynamics1} 
\end{align}
where $\mathbf{x}(t)= [x_1, x_2, \cdots x_n]^T\in\mathbb{R}^n$ is the network's state vector and $\mathbf{u}(t)\in\mathbb{R}^q$ is a vector of 
inputs applied at the actuation nodes. Here, the state matrix $\mathbf{L}$ is the Laplacian of the graph, representing the standard synchronization dynamics. The input matrix $\mathbf{B} \in \mathbb{R}^{n \times q}$ is given by $\mathbf{B} = \begin{bmatrix} \mathbf{e}_{r_1}, \mathbf{e}_{r_2}, \ldots , \mathbf{e}_{r_q} \end{bmatrix}$. The notation $e_i\in\mathbb{R}^{n}$ denotes the indicator vector whose $i$-th entry is $1$ and all other entries are $0$. Here, the pair $(-\mathbf{L}, \mathbf{B})$ is assumed to be controllable.

The operator's observed states from the sensor nodes are given by:
\begin{align}
\mathbf{y}_O(t) &= \mathbf{C}_O \mathbf{x}(t), \label{eq:dynamics2}
\end{align}
where $\mathbf{y}_O(t) \in\mathbb{R}^h$ is the vector of measurements observed at the sensor nodes. The corresponding output matrix $\mathbf{C}_O \in \mathbb{R}^{h \times n}$ is defined by $\mathbf{C}_O = \begin{bmatrix} \mathbf{e}_{z_1}, \mathbf{e}_{z_2}, \ldots,\mathbf{e}_{z_h}\end{bmatrix}^T$. 

On the other hand, an adversary can observe the system through: 
\begin{align}
\mathbf{y}_A(t) = \mathbf{C}_A \mathbf{x}(t), \label{eq:dynamics3}
\end{align}
where $\mathbf{y}_A(t) \in\mathbb{R}^m$ is the vector of measured states at the compromised nodes. The adversary's output matrix $\mathbf{C}_A \in \mathbb{R}^{m \times n}$ is given by $\mathbf{C}_A = \begin{bmatrix} \mathbf{e}_{s_1}, \mathbf{e}_{s_2}, \ldots,\mathbf{e}_{s_m}\end{bmatrix}^T$.

In this study, we aim to block observability at the compromised nodes, so that the adversary cannot estimate the network state $\mathbf{x}(t)$ from the measurement $\mathbf{y}_A(t)$. To do so, we apply feedback controls at the actuation nodes to prevent observability at the compromised nodes. Unlike our previous work, the proposed approach relies only on partial state information, namely the measurements $\mathbf{y}_O(t)$ obtained from the sensor nodes. Specifically, we will design feedback controls based on the measurements $\mathbf{y}_O(t)$ such that the closed-loop model is unobservable to the measurement $\mathbf{y}_A(t)$. In addition, we seek to preserve key open-loop properties, particularly the open-loop eigenstructure, as much as possible, while also reducing the number of required actuation nodes.

\section {Main Results}
In this section, we present the main results of our study. First, we present our results for the output-feedback-based approach to block observability at the compromised nodes. 
Next, we present the observer-based approach, providing designs for constructing a state observer using measurements from the sensor nodes and a state-feedback controller at the actuation nodes to block observability. Then, we show how the observer-based approach can be leveraged to develop a distributed control framework for blocking observability. 

Before presenting the main result, we make some remarks on the eigenstructure of the Laplacian dynamics. For simplicity, we assume that the eigenvalues of $\mathbf{L}$ are distinct. 
Let 
$\{\lambda_1,\lambda_2,\ldots,\lambda_n\}$ denote the eigenvalues of $\mathbf{L}$ with the
corresponding eigenvectors $\{\mathbf{v}_1,\mathbf{v}_2,\ldots,\mathbf{v}_n\}$. In addition, $\mathbf{V}_0 = [\,\mathbf{v}_1\;\mathbf{v}_2\;\cdots\;\mathbf{v}_n\,]\in\mathbb{C}^{n\times n}$  denotes the open-loop modal matrix of 
$\mathbf{L}$. For our convenience, we also define the set of open-loop eigenvalues and eigenvectors as $\Lambda_0$ and $V_0$, respectively. This spectral decomposition forms the foundation of our algorithms developed in this section, as we leverage eigenstructure assignment techniques to impose unobservability on any selected mode through an appropriately designed feedback control law. 

\subsection{Output-Feedback Based Approach}

In the first approach, we employ output feedback to block observability at the compromised nodes. Specifically, we consider a static output-feedback control scheme in which the input vector $\mathbf{u}(t)$ is specified as:
\begin{equation}
    \mathbf{u}(t) = -\mathbf{F}_1 \mathbf{y}_O(t)=-\mathbf{F}_1 \mathbf{C}_O \mathbf{x}(t),
    \label{eq:input_for_OP_FB_control}
\end{equation}
where $\mathbf{F}_1 \in \mathbb{R}^{q \times h}$ is the output-feedback gain matrix. Hence, for this control law, the closed-loop dynamics are given by:
\begin{equation}
    \dot{\mathbf{x}}(t) =  -(\mathbf{L} + \mathbf{B} \mathbf{F}_1 \mathbf{C}_O)\mathbf{x}(t).
    \label{eq:OP_FB_closed_loop_dynamics}
\end{equation}
Our goal in this approach is to design $\mathbf{F}_1$ such that the closed-loop dynamics $(\mathbf{C}_A,-(\mathbf{L} + \mathbf{B F}_1 \mathbf{C}_O))$ becomes unobservable. We design the gain matrix $\mathbf{F}_1$ by leveraging an output-feedback-based eigenstructure assignment technique \cite{srinathkumar2003eigenvalue, andry2007eigenstructure}. Specifically, we adapt and extend existing output-feedback methods to enforce unobservability at the compromised nodes while preserving as much of the open-loop eigenstructure as possible. To do so, we first select a self-conjugate subset of $h$ open-loop eigenvalues of $\mathbf{L}$ that we want to preserve in the closed-loop dynamics $-(\mathbf{L+BF}_1\mathbf{C}_O)$, denoted as
$\Lambda_D= \{\lambda_{i_1},\lambda_{i_2},\ldots,\lambda_{i_h}\}$, 
where $\{i_1,i_2,\ldots,i_h\} \subseteq \{1,2,\ldots,n\}$. We will place these eigenvalues and their associated eigenvectors through the output-feedback control scheme. For each selected  eigenvalue $\lambda_{i_k}\in \Lambda_D$, the corresponding open–loop eigenvector 
is the column $\mathbf{v}_{i_k}$ of $\mathbf{V}_0$.

Our proposed observability blocking design algorithm allows us to choose any eigenvalue $\lambda_{i_p}$ from the set $\Lambda_D$ together with its associated eigenvector $\mathbf{v}_{i_p}\in \mathbf{V}_0$, where $i_p \in \{i_1, i_2,\ldots, i_h\}$. 
We will modify $\mathbf{v}_{i_p}$ to a new vector $\mathbf{\hat{v}}_{i_p}$, whose entries corresponding to the compromised nodes become zero. Thus, the mode $(-\lambda_{i_p})$ becomes unobservable according to the well-known Popov–Belevitch–Hautus (PBH) test \cite{rugh1996linear} since $\mathbf{C}_A \hat{\mathbf{v}}_{i_p} = 0$. Meanwhile, the remaining eigenvalues in $\Lambda_D$ and their corresponding eigenvectors are preserved in the closed-loop system.

We present our observability blocking algorithm below. For convenience in presenting the algorithm, we assume that the actuation nodes are labeled as the first $q$ nodes (i.e., nodes $1, 2, \cdots, q$), while the compromised nodes are the last $m$ nodes (i.e., nodes $n-m+1, \cdots, n-1, n$) in the following discussion. This assumption does not impose any restriction, as the nodes can always be relabeled without loss of generality. Our proposed observability blocking control design algorithm involves the following steps:

\vspace{10pt}
\noindent\textbf{Algorithm 1.}\\

\begin{enumerate}[label={(\arabic*)}]

\item Select the set $\Lambda_D= \{\lambda_{i_1},\lambda_{i_2},\cdots,\lambda_{i_h}\}$ from the eigenvalues of $\mathbf{L}$ and then choose one eigenvalue $\lambda_{i_p}$ from $\Lambda_D$ and its associated eigenvector 
$\mathbf{v}_{i_p}$, where $i_p \in \{i_1, i_2,\ldots, i_h\}$. 

\vspace{0.3cm}

\item Compute a matrix $\mathbf{N}(\lambda_{i_p}) \in \mathbb{C}^{(n+q)\times q}$, whose columns are 
linearly independent and span the null space of $\mathbf{S}(\lambda_{i_p}) = \big[ (\mathbf{L} - \lambda_{i_p} \mathbf{I}_n)\; \mathbf{B} \big]$. Then, partition $\mathbf{N}(\lambda_{i_p})$ as $\mathbf{N}(\lambda_{i_p}) =
\begin{bmatrix}
\mathbf{N}_1(\lambda_{i_p})^T\;\mathbf{N}_2(\lambda_{i_p})^T
\end{bmatrix}^T$, where $\mathbf{N}_1(\lambda_{i_p}) \in \mathbb{C}^{n \times q}$ and $\mathbf{N}_2(\lambda_{i_p}) \in \mathbb{C}^{q \times q}$.
Therefore, $\mathbf{N}_1(\lambda_{_p})$ and $\mathbf{N}_2(\lambda_{_p})$ satisfy the following where $\mathbf{I}_n$ denotes an identity matrix in $\mathbb{R}^{n \times n}$:
\begin{align}
\big[ (\mathbf{L} - \lambda_p \mathbf{I}_n)\; \mathbf{B} \big]
\begin{bmatrix}
\mathbf{N}_1(\lambda_{i_p}) \\[1pt]
\mathbf{N}_2(\lambda_{i_p})
\end{bmatrix}
= \mathbf{0}.
\tag{4} \label{eq_prf1_2}
\end{align}

\item Partition $\mathbf{N}_1(\lambda_{i_p})$ as
\[
\mathbf{N}_1(\lambda_{i_p})
=
\begin{bmatrix}
\mathbf{N}_3(\lambda_{i_p})^T \\[1pt]
\mathbf{N}_4(\lambda_{i_p})^T
\end{bmatrix}^T,
\]
where $\mathbf{N}_3(\lambda_{i_p}) \in \mathbb{C}^{(n-m)\times q}$ and
$\mathbf{N}_4(\lambda_{i_p}) \in \mathbb{C}^{m\times q}$. 
Then, find a vector $\mathbf{h}_{i_p} \neq \mathbf{0}$ which lies in the null space 
of $\mathbf{N}_4(\lambda_{i_p})$, i.e.,
\[
\mathbf{N}_4(\lambda_{i_p})\,\mathbf{h}_{i_p} = \mathbf{0}.
\tag{5} \label{eq:th1:prf7}
\]

\item Compute the vector $\mathbf{\hat{v}}_{i_p}$ as:
\[
\mathbf{\hat{v}}_{i_p} = \mathbf{N}_1(\lambda_{i_p})\, \mathbf{h}_{i_p}.
\tag{6} \label{eq:th1_3}
\]

\item Form the $n \times h$ zero matrix $\mathbf{V}_h$ to store the desired $h$
closed-loop eigenvectors. For each eigenvalue in $\Lambda_D$, place
its corresponding open-loop eigenvector from $\mathbf{V}_0$ in the appropriate
column of $\mathbf{V}_h$. For the selected eigenvalue $\lambda_{i_p}$, replace $\mathbf{v}_{i_p}$ with the modified
eigenvector $\hat{\mathbf{v}}_{i_p}$ in $\mathbf{V}_h$. After all columns have been assigned, the
matrix $\mathbf{V}_h$ takes the form
\[
\mathbf{V}_h
=
\big[\,\mathbf{v}_{i_1} \;\ldots\; \mathbf{v}_{i_{p-1}} \;\hat{\mathbf{v}}_{i_p}\;
\mathbf{v}_{i_{p+1}} \;\ldots\; \mathbf{v}_{i_h}\,\big],
\]
If $\lambda_{i_p}$ is complex (i.e., not real), then it must have its complex conjugate $\bar{\lambda}_{i_p}$ also included in $\Lambda_D$. We then also need to replace $\bar{\mathbf{v}}_{i_p}$ with $\bar{\hat{\mathbf{v}}}_{i_p}$ in $\mathbf{V}_h$ where $\bar{\mathbf{v}}_{i_p}$ is the eigenvector corresponding to $\bar{\lambda}_{i_p}$.

\item 
For each eigenvalue $\lambda_{i_k} \in \Lambda_D$, extract the first $q$ entries of the
corresponding eigenvector in $\mathbf{V}_h$. We denote this sub-vector as $\mathbf{z}_{i_k}$. For instance, $\hat{\mathbf{v}}_{ip}= [\mathbf{z}_{i_p}^T~\mathbf{w}_{i_p}^T]^T$ where  $\mathbf{z}_{i_p}\in \mathbb{C}^{q}$ and $\mathbf{w}_{i_p}\in \mathbb{C}^{n-q}$. Then we compute
$\lambda_{i_k} \mathbf{z}_{i_k}$ and place these scaled vectors into the following manner to construct the $q \times h$ matrix $\mathbf{Z}$:
\[
\mathbf{Z}
=
\big[\,\lambda_1 \mathbf{z}_1 \;\; \lambda_2 \mathbf{z}_2 \;\; \cdots \;\;
\lambda_h \mathbf{z}_h\,\big].
\]

\item Partition $\mathbf{L}$ into:
\[
\mathbf{L} =
\begin{bmatrix}
\mathbf{L}_1 \\[4pt]
\mathbf{L}_2
\end{bmatrix},
\]
where $\mathbf{L}_1 \in \mathbb{R}^{q \times n}$ 
and $\mathbf{L}_2 \in \mathbb{R}^{(n-q) \times n}$.

\item Finally, the output-feedback gain matrix $\mathbf{F}_1$ can be obtained by:
\begin{align}
\mathbf{F}_1 = (\mathbf{Z} - \mathbf{L}_1 \mathbf{V}_h)(\mathbf{C}_O \mathbf{V}_h)^{-1}. \label{eq:F1}
\end{align}

\end{enumerate}

The following theorem formalizes that the controller designed according to Algorithm 1 makes the pair $(\mathbf{C}_A,  -(\mathbf{L} + \mathbf{B} \mathbf{F}_1 \mathbf{C}_O)$ unobservable. Furthermore, it preserves a desired set of $h$ eigenvalues specified by $\Lambda_D$, whose associated eigenvectors also remain unchanged except the eigenvector $\mathbf{v}_p$ (and also $\bar{\mathbf{v}}_p$ if $\lambda_{i_p}$ is complex).

\vspace{10pt}
\begin{theorem}
Consider the network synchronization model \eqref{eq:dynamics1}. Assume that: $1)$ the eigenvalues of $\mathbf{L}$ are distinct; 2) $q \geq m + 2$; 3) the matrix $\mathbf{C}_O \mathbf{V}_h$ is invertible; and 4) the pair $(-\mathbf{L},\mathbf{B})$ is controllable. Then, the gain matrix $\mathbf{F}_1$ of the output-feedback controller obtained via Algorithm 1  makes the pair $(\mathbf{C}_A,  -(\mathbf{L} + \mathbf{B} \mathbf{F}_1 \mathbf{C}_O)$  unobservable. Particularly, any selected open-loop mode (-$\lambda_{i_p}$) becomes unobservable to the compromised nodes. Furthermore, all eigenvalues in the set $\Lambda_D$ are preserved, whose associated closed-loop eigenvectors are given by the columns in $\mathbf{V}_h$.
\end{theorem}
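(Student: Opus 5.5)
The approach is to verify directly that $\mathbf{V}_h$ is a matrix of closed-loop eigenvectors. Concretely, I will show that
\[
(\mathbf{L}+\mathbf{B}\mathbf{F}_1\mathbf{C}_O)\mathbf{V}_h=\mathbf{V}_h\,\mathrm{diag}(\lambda_{i_1},\ldots,\lambda_{i_h}).
\]
The closed-loop matrix $-(\mathbf{L}+\mathbf{B}\mathbf{F}_1\mathbf{C}_O)$ then has the modes $-\lambda_{i_k}$, with the columns of $\mathbf{V}_h$ as eigenvectors. Unobservability will follow from the PBH test applied to the column $\hat{\mathbf{v}}_{i_p}$. The structural fact I will exploit is that, with the actuation nodes labeled first, $\mathbf{B}=[\mathbf{I}_q\;\;\mathbf{0}]^T$. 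Hence $\mathbf{B}\mathbf{F}_1\mathbf{C}_O\mathbf{V}_h$ has $\mathbf{F}_1\mathbf{C}_O\mathbf{V}_h$ in its first $q$ rows and zeros below.

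First I check that step (3) is feasible and that $\hat{\mathbf{v}}_{i_p}$ is a genuine nonzero vector.
\begin{itemize}
\item By controllability (assumption 4), $\mathbf{S}(\lambda_{i_p})$ has full row rank $n$, so its null space has dimension exactly $q$ and $\mathbf{N}(\lambda_{i_p})$ is well defined.
\item Since $\mathbf{N}_4(\lambda_{i_p})$ is $m\times q$ with $q\ge m+2$, its null space has dimension at least $2$, so a nonzero $\mathbf{h}_{i_p}$ exists.
\item Suppose $\mathbf{N}_1\mathbf{h}_{i_p}=\mathbf{0}$. Then \eqref{eq_prf1_2} gives $\mathbf{B}\mathbf{N}_2\mathbf{h}_{i_p}=\mathbf{0}$, and since $\mathbf{B}$ has full column rank, $\mathbf{N}_2\mathbf{h}_{i_p}=\mathbf{0}$. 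This gives $\mathbf{N}\mathbf{h}_{i_p}=\mathbf{0}$, contradicting the linear independence of the columns of $\mathbf{N}$. Hence $\hat{\mathbf{v}}_{i_p}\neq\mathbf{0}$.
\item By construction, $\mathbf{C}_A\hat{\mathbf{v}}_{i_p}=\mathbf{N}_4\mathbf{h}_{i_p}=\mathbf{0}$.
\end{itemize}
In the complex case I take $\bar{\mathbf{h}}_{i_p}$ for $\bar\lambda_{i_p}$, which is possible because $\mathbf{N}(\bar\lambda_{i_p})$ can be chosen as $\bar{\mathbf{N}}(\lambda_{i_p})$.

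Next I establish the eigen-relation row block by row block. Substituting \eqref{eq:F1} and using assumption 3 gives $\mathbf{F}_1\mathbf{C}_O\mathbf{V}_h=\mathbf{Z}-\mathbf{L}_1\mathbf{V}_h$. Therefore
\[
(\mathbf{L}+\mathbf{B}\mathbf{F}_1\mathbf{C}_O)\mathbf{V}_h=\begin{bmatrix}\mathbf{Z}\\ \mathbf{L}_2\mathbf{V}_h\end{bmatrix}.
\]
\begin{itemize}
\item \emph{Top $q$ rows.} The $k$-th column of $\mathbf{Z}$ is $\lambda_{i_k}\mathbf{z}_{i_k}$, which is exactly the top part of $\lambda_{i_k}$ times the $k$-th column of $\mathbf{V}_h$. (Here I read the indexing of $\mathbf{Z}$ in step (6) as $\lambda_{i_k}\mathbf{z}_{i_k}$.)
\item \emph{Bottom rows, unmodified columns.} For $\mathbf{v}_{i_k}$ we have $\mathbf{L}\mathbf{v}_{i_k}=\lambda_{i_k}\mathbf{v}_{i_k}$, so $\mathbf{L}_2\mathbf{v}_{i_k}=\lambda_{i_k}\mathbf{w}_{i_k}$.
\item \emph{Bottom rows, modified column.} Multiplying \eqref{eq_prf1_2} by $\mathbf{h}_{i_p}$ gives $(\mathbf{L}-\lambda_{i_p}\mathbf{I})\hat{\mathbf{v}}_{i_p}=-\mathbf{B}\mathbf{N}_2\mathbf{h}_{i_p}$. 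The right-hand side vanishes outside the first $q$ rows, so $\mathbf{L}_2\hat{\mathbf{v}}_{i_p}=\lambda_{i_p}\mathbf{w}_{i_p}$.
\end{itemize}
Combining the blocks gives the claimed relation.

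Finally I conclude.
\begin{itemize}
\item Invertibility of $\mathbf{C}_O\mathbf{V}_h$ forces the columns of $\mathbf{V}_h$ to be linearly independent.
\item Hence each $\lambda_{i_k}\in\Lambda_D$ is an eigenvalue of $\mathbf{L}+\mathbf{B}\mathbf{F}_1\mathbf{C}_O$, i.e., the closed-loop matrix has mode $-\lambda_{i_k}$ with eigenvector given by the corresponding column of $\mathbf{V}_h$.
\item For the mode $-\lambda_{i_p}$, the eigenvector $\hat{\mathbf{v}}_{i_p}$ satisfies $\mathbf{C}_A\hat{\mathbf{v}}_{i_p}=\mathbf{0}$. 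By the PBH test, the pair $(\mathbf{C}_A,-(\mathbf{L}+\mathbf{B}\mathbf{F}_1\mathbf{C}_O))$ is therefore unobservable.
\item Because $\Lambda_D$ is self-conjugate and the columns of $\mathbf{V}_h$ and $\mathbf{Z}$ appear in conjugate pairs, $\mathbf{F}_1$ is real. Writing $\mathbf{V}_h$ and $\mathbf{Z}$ in real/imaginary form makes this explicit.
\end{itemize}

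The main obstacle I expect is bookkeeping rather than depth. The sign convention between $\mathbf{L}$ and $-\mathbf{L}$ must be tracked consistently, and the complex-conjugate case must be handled so that both $\hat{\mathbf{v}}_{i_p}$ and its conjugate satisfy the block identities and $\mathbf{F}_1$ comes out real.
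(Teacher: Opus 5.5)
Your proposal is correct and follows essentially the same route as the paper's proof: the top $q$ rows of $(\mathbf{L}+\mathbf{B}\mathbf{F}_1\mathbf{C}_O)\mathbf{V}_h$ come from the formula for $\mathbf{F}_1$, and the bottom $n-q$ rows come from the open-loop eigen-equation or the null-space relation \eqref{eq_prf1_2} together with $\mathbf{B}=[\mathbf{I}_q\;\;\mathbf{0}]^T$; unobservability then follows from the PBH test since $\mathbf{C}_A\hat{\mathbf{v}}_{i_p}=\mathbf{N}_4(\lambda_{i_p})\mathbf{h}_{i_p}=\mathbf{0}$. Your extra checks fill in details the paper leaves implicit: the null space has dimension exactly $q$ by controllability, $\hat{\mathbf{v}}_{i_p}\neq\mathbf{0}$, the conjugate choice $\mathbf{N}(\bar\lambda_{i_p})=\bar{\mathbf{N}}(\lambda_{i_p})$ is available, and $\mathbf{F}_1$ is real.
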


\begin{proof}
From \eqref{eq:F1}, we can write 
\begin{align}
(\mathbf{L}_1+\mathbf{F}_1\mathbf{C}_O)\mathbf{V}_h=\mathbf{Z}.    
\end{align}
From the definitions of $\mathbf{V}_h$ and $\mathbf{Z}$, we get
\begin{align}
(\mathbf{L}_1+\mathbf{F}_1\mathbf{C}_O)\hat{\mathbf{v}}_{i_p} &=\lambda_{i_p} \mathbf{z}_{i_p} \label{eq:proof_th1}\\
(\mathbf{L}_1+\mathbf{F}_1\mathbf{C}_O) \bar{\hat{\mathbf{v}}}_{i_p} &=\bar{\lambda}_{i_p}\bar{\mathbf{z}}_{i_p} ~~~\mathrm{when}~\lambda_{i_p}~\mathrm{is~complex} \nonumber\\
(\mathbf{L}_1+\mathbf{F}_1\mathbf{C}_O)\hat{\mathbf{v}}_{i_k} &=\lambda_{i_k} \mathbf{z}_{i_k} ~~~\mathrm{for~all~other}~i_k \nonumber
\end{align}
Now, we focus on \eqref{eq:proof_th1} to show that $(-\lambda_{i_p})$ and $\hat{\mathbf{v}}_{i_p}$ are closed-loop eigenvalue and eigenvector pair. To do so, consider the following partition of $\mathbf{L}$.
\begin{equation*}
\mathbf{L} =
\begin{bmatrix}
\mathbf{L}_1 \\
\mathbf{L}_2
\end{bmatrix}
=
\begin{bmatrix}
\mathbf{L}_{11} & \mathbf{L}_{12} \\
\mathbf{L}_{21} & \mathbf{L}_{22} \nonumber
\end{bmatrix}  
\end{equation*}
where $\mathbf{L}_{1} \in \mathbb{R}^{q \times n}$, $\mathbf{L}_{11} \in \mathbb{R}^{q \times q}$ and the size of the other sub-matrices follows accordingly. Now, from \eqref{eq:proof_th1}, we can write
\begin{align}
\lambda_{i_p}~\mathbf{I}_q~\mathbf{z}_{i_p}-\begin{bmatrix}
\mathbf{L}_{11} &
\mathbf{L}_{12}
\end{bmatrix}~\begin{bmatrix}
\mathbf{z}_{i_p} \\
\mathbf{w}_{i_p}
\end{bmatrix} = \mathbf{F}_1\mathbf{C}_O~\begin{bmatrix}
\mathbf{z}_{i_p} \\
\mathbf{w}_{i_p}
\end{bmatrix} \nonumber
\end{align}
or
\begin{align}
\begin{bmatrix}
\lambda_{i_p}\mathbf{I}_q-\mathbf{L}_{11} &
-\mathbf{L}_{12}
\end{bmatrix}~\begin{bmatrix}
\mathbf{z}_{i_p} \\
\mathbf{w}_{i_p}
\end{bmatrix} = \mathbf{F}_1\mathbf{C}_O~\begin{bmatrix}
\mathbf{z}_{i_p} \\
\mathbf{w}_{i_p}
\end{bmatrix} \label{eq:th1_prf3}
\end{align}
Now, multiplying \eqref{eq_prf1_2} with $\mathbf{h}_{i_p}$ and using \eqref{eq:th1_3}, we have
\begin{align}
\big[ (\mathbf{L} - \lambda_{i_p} \mathbf{I}_n)\; \mathbf{B} \big]
\begin{bmatrix}
\hat{\mathbf{v}}_{i_p} \\
\mathbf{N}_2(\lambda_{i_p})~\mathbf{h}_{i_p}
\end{bmatrix} 
= \mathbf{0}. \label{eq:th1_prf4}  
\end{align}
Since we have labeled the first $q$ nodes as actuation nodes, without loss of generality, we can write $\mathbf{B}=[\mathbf{I}_q^T~~\mathbf{0}^T]^T$. Then, \eqref{eq:th1_prf4} can be written as 
\begin{align*}
\begin{bmatrix}
\lambda_{i_p}\mathbf{I}_q-\mathbf{L}_{11} &
-\mathbf{L}_{12} & \mathbf{I}_q \\
-\mathbf{L}_{21} &
\lambda_{i_p}\mathbf{I}_{n-q}-\mathbf{L}_{22} & \mathbf{0}
\end{bmatrix}~\begin{bmatrix}
\mathbf{z}_{i_p} \\ \mathbf{w}_{i_p}\\
\mathbf{N}_2(\lambda_{i_p})~\mathbf{h}_{i_p}
\end{bmatrix} =
\mathbf{0}
\end{align*}
From the last block row above, we can write
\begin{align}
\begin{bmatrix}
-\mathbf{L}_{21} &
\lambda_{i_p}\mathbf{I}_{n-q}-\mathbf{L}_{22}
\end{bmatrix}~\begin{bmatrix}
\mathbf{z}_{i_p} \\ \mathbf{w}_{i_p}
\end{bmatrix} =
\mathbf{0} \label{eq:th1_prf5}
\end{align}
Combining \eqref{eq:th1_prf3} and \eqref{eq:th1_prf5}, we get
\begin{align}
\begin{bmatrix}
\lambda_{i_p}\mathbf{I}_q-\mathbf{L}_{11} &
-\mathbf{L}_{12} \\
-\mathbf{L}_{21} &
\lambda_{i_p}\mathbf{I}_{n-q}-\mathbf{L}_{22}
\end{bmatrix}~\begin{bmatrix}
\mathbf{z}_{i_p} \\ \mathbf{w}_{i_p}
\end{bmatrix} \nonumber \\~~~= ~\begin{bmatrix}
\mathbf{I}_q \\ \mathbf{0}
\end{bmatrix}
\mathbf{F}_1\mathbf{C}_O~\begin{bmatrix}
\mathbf{z}_{i_p} \\
\mathbf{w}_{i_p}
\end{bmatrix} \label{eq:th1_prf6}
\end{align}
From \eqref{eq:th1_prf6}, we have $(\lambda_{i_p} \mathbf{I}_n-\mathbf{L})\hat{\mathbf{v}}_{i_p}=\mathbf{B}\mathbf{F}_1\mathbf{C}_O \hat{\mathbf{v}}_{i_p}$ or $(\mathbf{L+BF}_1\mathbf{C}_O) \hat{\mathbf{v}}_{i_p}=\lambda_{i_p}\hat{\mathbf{v}}_{i_p}$ or $-(\mathbf{L+BF}_1\mathbf{C}_O) \hat{\mathbf{v}}_{i_p}=(-\lambda_{i_p})\hat{\mathbf{v}}_{i_p}$. Therefore, $(-\lambda_{i_p})$ and $\hat{\mathbf{v}}_{i_p}$ are closed-loop eigenvalue and eigenvector pairs. Since all eigenvalues $\lambda_{i_k}$ selected in $\Lambda_D$ and their associated eigenvectors $\mathbf{v}_{i_k}$ also satisfy the equation
\begin{align*}
\big[ (\mathbf{L} - \lambda_{i_k{}} \mathbf{I}_n)\; \mathbf{B} \big]
\begin{bmatrix}
\mathbf{v}_{i_k} \\
\mathbf{0}
\end{bmatrix}
= \mathbf{0},
\end{align*}
following the same procedure, we can similarly show that all eigenvalues in the set $\Lambda_D$ and their associated eigenvectors in $\mathbf{V}_h$ are preserved in $(\mathbf{L+BF}_1\mathbf{C}_O)$. Now, we need to show that $\mathbf{C}_A\hat{\mathbf{v}}_{i_p}=\mathbf{0}$ completes the proof which directly follows the argument in \cite{al2022observability}. Since $q\geq m+2$, linearly independent $\hat{\mathbf{v}}_{i_p}$ and $\bar{\hat{\mathbf{v}}}_{i_p}$ always exist. Note that the last $m$ entries in $\hat{\mathbf{v}}_{i_p}$ are zero according to \eqref{eq:th1:prf7}. Since all the compromised nodes are labeled as the last $m$ nodes, therefore  $\mathbf{C}_A\hat{\mathbf{v}}_{i_p}=\mathbf{0}$.

\end{proof}

The number of required actuation nodes ($q \geq m+2$) can be reduced in Theorem 1 by exploiting the network's graph structure. Specifically, when a vertex cutset exists that separates the actuation nodes from the compromised nodes, we can directly invoke the results on sparser observability blocking from \cite{al2022observability} (specifically, see Lemma 1 and Theorem 3 in \cite{al2022observability}), provided that all assumptions of Theorem 1 are met. Although Lemma 1 and Theorem 3 in \cite{al2022observability} were derived for state-feedback, the results and arguments in the proofs are valid for any control law applied to the actuation nodes separated by a vertex cutset from the compromised nodes. Thus, we simply rephrase the result in our context and omit the proof to avoid repetition.

\begin{corollary}
Consider the network synchronization model \eqref{eq:dynamics1}. Suppose there is a cutset $\mathcal{V}_{cut}$ that separates the graph into two partitions $\mathcal{V}_1$ and $\mathcal{V}_2$ in such a way that $\mathcal{V}_1$ does not include any compromised nodes and $\mathcal{V}_2$ does not include any actuation nodes. If the selected $\lambda_{i_p}$ is not an eigenvalue of the block $\mathbf{L}_{\mathcal{V}_2\mathcal{V}_2}$ from the Laplacian matrix $\mathbf{L}$, then the number of actuation nodes required to design the observability-blocking controller in Theorem 1 can be reduced to $q\geq |\mathcal{V}_{cut}|+2$ nodes by designing a controller that makes the mode $(-\lambda_{i_p})$ unobservable for measurements taken at the nodes in $\mathcal{V}_{cut}$.
\end{corollary}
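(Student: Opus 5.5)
Our plan is to reduce the corollary to Theorem 1 by replacing the compromised set with the cutset $\mathcal{V}_{cut}$, and then use the graph structure to propagate the zero pattern from $\mathcal{V}_{cut}$ into $\mathcal{V}_2$. Concretely, we run Algorithm 1 with $\mathbf{C}_A$ replaced by $\mathbf{C}_{cut}$, the indicator rows of the nodes in $\mathcal{V}_{cut}$, and with these nodes labeled last. Step (3) then needs a nonzero $\mathbf{h}_{i_p}$ in the null space of the $|\mathcal{V}_{cut}|\times q$ block $\mathbf{N}_4(\lambda_{i_p})$. Exactly as in Theorem 1, the condition $q\ge |\mathcal{V}_{cut}|+2$ guarantees that this null space has dimension at least two. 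This leaves room to choose $\hat{\mathbf{v}}_{i_p}$ so that it (and $\bar{\hat{\mathbf{v}}}_{i_p}$ in the complex case) is linearly independent of the other columns of $\mathbf{V}_h$. Assuming $\mathbf{C}_O\mathbf{V}_h$ is invertible, Theorem 1 then applies verbatim. It yields that $(-\lambda_{i_p},\hat{\mathbf{v}}_{i_p})$ is a closed-loop eigenpair of $-(\mathbf{L}+\mathbf{B}\mathbf{F}_1\mathbf{C}_O)$ with $\mathbf{C}_{cut}\hat{\mathbf{v}}_{i_p}=\mathbf{0}$, and that the rest of $\Lambda_D$ is preserved.

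The key step is to show that $\hat{\mathbf{v}}_{i_p}$ also vanishes on $\mathcal{V}_2$, and hence on every compromised node. We will write the closed-loop eigen-equation $(\mathbf{L}+\mathbf{B}\mathbf{F}_1\mathbf{C}_O)\hat{\mathbf{v}}_{i_p}=\lambda_{i_p}\hat{\mathbf{v}}_{i_p}$ and restrict it to the rows indexed by $\mathcal{V}_2$. Since $\mathcal{V}_2$ contains no actuation nodes, those rows of $\mathbf{B}$ are zero, so the feedback term disappears whatever the form of the control law. This is why the argument of \cite{al2022observability} carries over from state feedback to output feedback. Since $\mathcal{V}_{cut}$ separates $\mathcal{V}_1$ from $\mathcal{V}_2$, there are no edges between $\mathcal{V}_1$ and $\mathcal{V}_2$, so $\mathbf{L}_{\mathcal{V}_2\mathcal{V}_1}=\mathbf{0}$. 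The restricted equation therefore reads
\[
\mathbf{L}_{\mathcal{V}_2\mathcal{V}_2}\hat{\mathbf{v}}_{\mathcal{V}_2}+\mathbf{L}_{\mathcal{V}_2\mathcal{V}_{cut}}\hat{\mathbf{v}}_{\mathcal{V}_{cut}}=\lambda_{i_p}\hat{\mathbf{v}}_{\mathcal{V}_2}.
\]
Using $\hat{\mathbf{v}}_{\mathcal{V}_{cut}}=\mathbf{0}$, this gives $(\mathbf{L}_{\mathcal{V}_2\mathcal{V}_2}-\lambda_{i_p}\mathbf{I})\hat{\mathbf{v}}_{\mathcal{V}_2}=\mathbf{0}$. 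Because $\lambda_{i_p}$ is by hypothesis not an eigenvalue of $\mathbf{L}_{\mathcal{V}_2\mathcal{V}_2}$, we conclude $\hat{\mathbf{v}}_{\mathcal{V}_2}=\mathbf{0}$.

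It follows that $\hat{\mathbf{v}}_{i_p}$ is zero on $\mathcal{V}_{cut}\cup\mathcal{V}_2$. Every compromised node lies in $\mathcal{V}_{cut}\cup\mathcal{V}_2$, since $\mathcal{V}_1$ contains none. Hence $\mathbf{C}_A\hat{\mathbf{v}}_{i_p}=\mathbf{0}$, and the PBH test shows that $(-\lambda_{i_p})$ is unobservable to the adversary. The same argument applied to $\bar{\hat{\mathbf{v}}}_{i_p}$ covers the complex case.

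The main obstacle is the separation step. We must check that the edge structure really forces $\mathbf{L}_{\mathcal{V}_2\mathcal{V}_1}=\mathbf{0}$ under the paper's convention for directed edges: either the cutset separates in both directions, or only the direction of influence into $\mathcal{V}_2$ matters. We must also confirm that the linear-independence argument from Theorem 1 survives with $m$ replaced by $|\mathcal{V}_{cut}|$. For the first point I would try to state the separation as ``no node of $\mathcal{V}_1$ is an in-neighbor of a node in $\mathcal{V}_2$'', which is all the proof uses.
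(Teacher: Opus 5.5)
Your proof is correct and follows the route the paper intends. The paper omits the proof and defers to Lemma 1 and Theorem 3 of \cite{al2022observability}, noting that those arguments hold for any control law at actuation nodes separated by the cutset. Your argument reconstructs exactly that: you run Algorithm 1 with $\mathcal{V}_{cut}$ in place of the compromised set, then restrict the eigen-equation to the rows of $\mathcal{V}_2$, where $\mathbf{B}$ and $\mathbf{L}_{\mathcal{V}_2\mathcal{V}_1}$ vanish, so that $\lambda_{i_p}\notin\sigma(\mathbf{L}_{\mathcal{V}_2\mathcal{V}_2})$ forces $\hat{\mathbf{v}}_{\mathcal{V}_2}=\mathbf{0}$. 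Your directed-edge worry is also settled by the paper's convention $L_{ki}=-w_{ik}$ for $(i,k)\in\mathcal{E}$: under it, $\mathbf{L}_{\mathcal{V}_2\mathcal{V}_1}=\mathbf{0}$ exactly when no node of $\mathcal{V}_1$ is an in-neighbor of a node in $\mathcal{V}_2$, which a vertex cutset guarantees.
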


Here, we make some remarks on Condition 3 in Theorem 1. This condition requires the matrix $\mathbf{C}_O \mathbf{V}_h$ to be invertible or equivalently $\operatorname{rank}(\mathbf{C}_O \mathbf{V}_h) = h$. This condition is satisfied when the closed-loop system remains observable at the sensor nodes for all the eigenvalues specified by the set $\Lambda_d$. Therefore, we can say that if the closed-loop model $(\mathbf{C}_O,  -(\mathbf{L} + \mathbf{B} \mathbf{F}_1 \mathbf{C}_O)$ is observable, then $\mathbf{C}_O \mathbf{V}_h$ can be invertible. However, this can be violated if all sensor nodes are located in  $\mathcal{V}_{cut} \cup \mathcal{V}_2$. In that case, Algorithm 1 will fail, as there is no output-feedback control that can impose unobservability on the compromised nodes. We formalize this result below.


\begin{corollary}
Consider the network synchronization model \eqref{eq:dynamics1}. Suppose there exists a vertex cutset $\mathcal{V}_{cut}$ that partitions the graph into two subsets $\mathcal{V}_1$ and $\mathcal{V}_2$ such that $\mathcal{V}_1$ contains no compromised nodes and $\mathcal{V}_2$ contains no actuation nodes. Furthermore, assume that all sensor nodes are located in $\mathcal{V}_{cut} \cup \mathcal{V}_2$, and that all nodes in $\mathcal{V}_{cut}$ are compromised nodes. If $(-\lambda_p)$ is observable for the open-loop model $(\mathbf{C}_O, -\mathbf{L})$ and is not an eigenvalue of the submatrix $\mathbf{L}_{\mathcal{V}_2 \mathcal{V}_2}$, then no output-feedback controller can be designed to render the mode $(-\lambda_p)$ unobservable in the closed-loop model $(\mathbf{C}_A, -(\mathbf{L} + \mathbf{B} \mathbf{F}_1 \mathbf{C}_O))$.
\end{corollary}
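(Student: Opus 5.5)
The plan is to argue by contradiction using the PBH test, exploiting the fact that the sensors cannot see anything that the adversary cannot see. Suppose some output-feedback gain $\mathbf{F}_1$ renders the mode $(-\lambda_p)$ unobservable in $(\mathbf{C}_A, -(\mathbf{L}+\mathbf{B}\mathbf{F}_1\mathbf{C}_O))$. By the PBH test, there is then a nonzero $\mathbf{v}\in\mathbb{C}^n$ satisfying two conditions. First, $(\mathbf{L}+\mathbf{B}\mathbf{F}_1\mathbf{C}_O)\mathbf{v}=\lambda_p\mathbf{v}$. Second, $\mathbf{C}_A\mathbf{v}=\mathbf{0}$. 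If $(-\lambda_p)$ is not a closed-loop eigenvalue at all, there is nothing to prove, so this is the only case. I would partition $\mathbf{v}$ conformally with $\mathcal{V}_1,\mathcal{V}_{cut},\mathcal{V}_2$ as $\mathbf{v}_{1},\mathbf{v}_{cut},\mathbf{v}_{2}$, and show step by step that $\mathbf{v}$ is invisible to the sensors.

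Step 1. Since every node of $\mathcal{V}_{cut}$ is compromised, $\mathbf{C}_A\mathbf{v}=\mathbf{0}$ gives $\mathbf{v}_{cut}=\mathbf{0}$.

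Step 2. Look at the rows of the closed-loop eigen-equation indexed by $\mathcal{V}_2$. Because $\mathcal{V}_2$ contains no actuation nodes, the rows of $\mathbf{B}$ indexed by $\mathcal{V}_2$ vanish, so the feedback term drops out of these rows. Because $\mathcal{V}_{cut}$ separates $\mathcal{V}_1$ from $\mathcal{V}_2$, there are no edges from $\mathcal{V}_1$ into $\mathcal{V}_2$, so $\mathbf{L}_{\mathcal{V}_2\mathcal{V}_1}=\mathbf{0}$. These rows therefore reduce to
\begin{equation*}
\mathbf{L}_{\mathcal{V}_2\mathcal{V}_{cut}}\mathbf{v}_{cut}+(\mathbf{L}_{\mathcal{V}_2\mathcal{V}_2}-\lambda_p\mathbf{I})\mathbf{v}_{2}=\mathbf{0}.
\end{equation*}
Using $\mathbf{v}_{cut}=\mathbf{0}$ and the hypothesis that $\lambda_p$ is not an eigenvalue of $\mathbf{L}_{\mathcal{V}_2\mathcal{V}_2}$, we get $\mathbf{v}_2=\mathbf{0}$.

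Step 3. All sensor nodes lie in $\mathcal{V}_{cut}\cup\mathcal{V}_2$, where $\mathbf{v}$ is zero, so $\mathbf{C}_O\mathbf{v}=\mathbf{0}$. The feedback term $\mathbf{B}\mathbf{F}_1\mathbf{C}_O\mathbf{v}$ then vanishes, and the closed-loop equation collapses to $\mathbf{L}\mathbf{v}=\lambda_p\mathbf{v}$. Thus $\mathbf{v}$ is an open-loop eigenvector of $-\mathbf{L}$ for $(-\lambda_p)$ with $\mathbf{C}_O\mathbf{v}=\mathbf{0}$. By the PBH test this means $(-\lambda_p)$ is unobservable for $(\mathbf{C}_O,-\mathbf{L})$, contradicting the hypothesis. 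Since $\mathbf{F}_1$ was arbitrary, no output-feedback controller can work.

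The main point requiring care is Step 2, namely justifying $\mathbf{L}_{\mathcal{V}_2\mathcal{V}_1}=\mathbf{0}$ from the cutset definition for a directed graph. I would interpret "separates" as the absence of edges between $\mathcal{V}_1$ and $\mathcal{V}_2$, as in \cite{al2022observability}, and note that only the direction $\mathcal{V}_1\to\mathcal{V}_2$ is needed. It also has to be checked that actuation nodes lying in $\mathcal{V}_{cut}$ cause no trouble. They do not: the argument never uses the rows indexed by $\mathcal{V}_{cut}$ or $\mathcal{V}_1$, which are exactly where the feedback acts.
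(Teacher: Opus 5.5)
Your proof is correct and follows the same route as the paper. The paper argues by contradiction: observability of $(-\lambda_p)$ at the sensor nodes cannot be destroyed by output feedback, yet the cutset result (its Corollary~1, i.e., the propagation lemma of \cite{al2022observability}) would carry unobservability from the compromised nodes in $\mathcal{V}_{cut}$ to all of $\mathcal{V}_{cut}\cup\mathcal{V}_2$, and hence to every sensor node. Your Steps~1--2 prove that propagation explicitly, and Step~3 proves the fixed-$s$ PBH invariance at $s=-\lambda_p$ explicitly, so your argument is a self-contained version of the paper's.
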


\begin{proof}
Observability at the output is invariant under output feedback \cite{rugh1996linear}. Since $(-\lambda_p)$ is observable for the open-loop model $(\mathbf{C}_O, -\mathbf{L})$, it must remain observable in the closed-loop model $(\mathbf{C}_O, -(\mathbf{L} + \mathbf{B} \mathbf{F}_1 \mathbf{C}_O))$. However, since all sensor nodes are located in $\mathcal{V}_{cut} \cup \mathcal{V}_2$ and all nodes in $\mathcal{V}_{cut}$ are compromised, blocking observability at the compromised nodes would also induce unobservability at the sensor nodes by Corollary 1. This contradicts the invariance of output-observability under output feedback. Hence, no output-feedback controller can be designed to make the mode $(-\lambda_p)$ unobservable at the compromised nodes. 
\end{proof}

Several additional notes about the above algorithm are worth mentioning:

\begin{enumerate}

\item If the eigenvalues of $\mathbf{L}$ are real\footnote{For an undirected graph, $\mathbf{L}$ is symmetric and has real eigenvalues only.}, then we require $q = m + 1$ actuation nodes to design the observability-blocking controller. Accordingly, the number of actuation nodes specified in Corollary~1 also decreases to $q= |\mathcal{V}_{cut}|+1$.

\item The output-feedback–based approach preserves only $h$ eigenvalues of the open-loop system. This implies that the remaining $n-h$ eigenvalues will be modified in the closed-loop dynamics. Consequently, having more sensor nodes is advantageous, as it allows more of the open-loop eigenstructure to be retained. However, the modified eigenvalues can potentially destabilize the closed-loop system. Hence, output-feedback-based controls must be applied with caution. 

\item The presented design method is general, in that it does not depend on the state matrix being Laplacian, nor on the specific graph topology. Thereby, this design holds for any linear-time invariant dynamics as long as the conditions given in Theorem 1 are satisfied.


\end{enumerate}

\subsection{Observer Based Approach}

In this subsection, we develop observability-blocking controls that rely on a state observer and full-state feedback control. The state observer reconstructs the full network state $\mathbf{x}(t)$ from the measurements $\mathbf{y}_O(t)$ and the state-feedback controller uses the estimated states for actuation. Specifically, a Luenberger observer is employed to estimate the network's state as given by the following form \cite{siljak2011decentralized, kailath2000linear}:
\begin{align}
\dot{\hat{\mathbf{x}}}(t)
    &= -\mathbf{L}\hat{\mathbf{x}}(t) + \mathbf{B}\mathbf{u}(t)
       + \mathbf{K}\big( \mathbf{y}_O(t) - \mathbf{C}_O\hat{\mathbf{x}}(t) \big), \label{eq:observer} 
\end{align}
where $\hat{\mathbf{x}}(t) \in \mathbb{R}^n$ denotes the estimate of the network's state by the observer,  and $\mathbf{K} \in \mathbb{R}^{n \times h}$ is the observer gain matrix. Then, a state-feedback controller is used based on the estimated state $\hat{\mathbf{x}}(t)$ according to:
\begin{equation}
    \mathbf{u}(t) = -\mathbf{F}_2 \hat{\mathbf{x}}(t),
    \label{eq:observer_system_input}
\end{equation}
where $\mathbf{F}_2 \in \mathbb{R}^{q \times n}$ is the state–feedback controller gain matrix. Our goal in this approach is to design the observer gain $\mathbf{K}$ and controller gain $\mathbf{F}_2$ such that the estimation error $\mathbf{e}(t)= \mathbf{x}(t)-\hat{\mathbf{x}}(t)$ becomes zero quickly and the closed-loop system becomes unobservable at the compromised nodes, i.e., the pair $(\mathbf{C}_A,-(\mathbf{L}+\mathbf{BF}_2)$ becomes unobservable.

The design in this approach begins with the construction of a Luenberger observer given by \eqref{eq:observer}. If the open-loop model $(\mathbf{C}_O, -\mathbf{L})$ is observable, the observer gain $\mathbf{K}$ can be designed using standard pole-placement techniques \cite{chen1984linear} to place the eigenvalues of $-(\mathbf{L} + \mathbf{K}\mathbf{C}_O)$ sufficiently far to the left of the imaginary axis, ensuring that the estimation error $\mathbf{e}(t)$ converges to zero rapidly.  The convergence rate is governed by the eigenvalue of $(\mathbf{L} + \mathbf{K}\mathbf{C}_O)$ with the smallest real part, which is typically chosen to have a large magnitude.
Then, the gain matrix $\mathbf{F}_2$ for state-feedback control is designed using the algorithm given in \cite{al2022observability}. To do so, we select any of the eigenvalues $\lambda_p$ of $\mathbf{L}$ together with its associated eigenvector $\mathbf{v}_p \in \mathbf{V}_0$ where $p \in \{1,2,\ldots,n\}$. The designed gain matrix $\mathbf{F}_2$ modifies $\mathbf{v}_p$ to a new vector $\mathbf{\hat{v}}_p$ whose entries at the compromised nodes are zero to ensure that $\mathbf{C}_A \hat{\mathbf{v}}_p$ = 0. Hence, gain $\mathbf{F}_2$ makes the closed-loop model $(\mathbf{C}_A, -(\mathbf{L} + \mathbf{B F}_2))$ unobservable for the mode $(-\lambda_p)$. Furthermore, it maintains all the open-loop eigenvalues and most of the open-loop eigenvectors. We formalize the outcome of this approach as follows.

\vspace{10pt}
\begin{theorem}
Consider the network synchronization model \eqref{eq:dynamics1}. Assume that: 1) the eigenvalues of $\mathbf{L}$ are distinct; 2) $q \geq m + 2$; 3) the pair $(\mathbf{C}_O,-\mathbf{L})$ is observable; and 4) the pair $(-\mathbf{L}, \mathbf{B})$ is controllable. Then, a state observer with the gain matrix $\mathbf{K}$ can be designed to ensure that the estimation error $\mathbf{e}(t)$ converges to zero rapidly, while a state-feedback controller with the gain matrix $\mathbf{F}_2$ can be designed independently to block the observability of the system at the compromised nodes. Specifically, the pair $(\mathbf{C}_A,\; -(\mathbf{L} + \mathbf{B}\mathbf{F}_2))$ has an unobservable mode at $(-\lambda_p)$ where $\lambda_p$ is any selected eigenvalue of $\mathbf{L}$. Furthermore, all the open-loop eigenvalues will be preserved. 
\end{theorem}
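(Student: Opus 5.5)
The plan is to use the separation principle: write the closed loop in plant-state and estimation-error coordinates, then handle the observer gain $\mathbf{K}$ and the feedback gain $\mathbf{F}_2$ independently.

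First, set $\mathbf{e}(t)=\mathbf{x}(t)-\hat{\mathbf{x}}(t)$. Substituting \eqref{eq:observer_system_input} into \eqref{eq:dynamics1} and \eqref{eq:observer} gives $\dot{\mathbf{e}}=-(\mathbf{L}+\mathbf{K}\mathbf{C}_O)\mathbf{e}$. Writing $\mathbf{u}=-\mathbf{F}_2\mathbf{x}+\mathbf{F}_2\mathbf{e}$ then gives the block upper-triangular system
\[
\begin{bmatrix}\dot{\mathbf{x}}\\ \dot{\mathbf{e}}\end{bmatrix}=\begin{bmatrix}-(\mathbf{L}+\mathbf{B}\mathbf{F}_2) & \mathbf{B}\mathbf{F}_2\\ \mathbf{0} & -(\mathbf{L}+\mathbf{K}\mathbf{C}_O)\end{bmatrix}\begin{bmatrix}\mathbf{x}\\ \mathbf{e}\end{bmatrix}.
\]
Its spectrum is the union of the spectra of the two diagonal blocks. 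The error dynamics involve only $\mathbf{K}$, and the $\mathbf{x}$-block involves only $\mathbf{F}_2$, so the two gains can be designed independently.

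\emph{Observer step.} By assumption 3), $(\mathbf{C}_O,-\mathbf{L})$ is observable. Duality with pole placement (equivalently, pole placement for the controllable pair $(-\mathbf{L}^T,\mathbf{C}_O^T)$) therefore lets us place the eigenvalues of $-(\mathbf{L}+\mathbf{K}\mathbf{C}_O)$ arbitrarily far into the open left half-plane. This makes $\mathbf{e}(t)\to\mathbf{0}$ exponentially at any prescribed rate.

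\emph{Controller step.} Here I would reuse the argument of Theorem 1 with full state access, i.e., $\mathbf{C}_O$ replaced by $\mathbf{I}_n$ and $h=n$. This is exactly the state-feedback algorithm of \cite{al2022observability}.
\begin{itemize}
\item Compute $\mathbf{N}(\lambda_p)$ spanning the null space of $[(\mathbf{L}-\lambda_p\mathbf{I}_n)\;\mathbf{B}]$. This has dimension $q$ because controllability (assumption 4) gives full row rank $n$.
\item Choose $\mathbf{h}_p\neq\mathbf{0}$ with $\mathbf{N}_4(\lambda_p)\mathbf{h}_p=\mathbf{0}$ and set $\hat{\mathbf{v}}_p=\mathbf{N}_1(\lambda_p)\mathbf{h}_p$; if $\lambda_p$ is complex, do the same for its conjugate.
\item Form $\mathbf{V}$ from $\mathbf{V}_0$ by replacing $\mathbf{v}_p$ (and $\bar{\mathbf{v}}_p$) with $\hat{\mathbf{v}}_p$ (and its conjugate).
\item Define $\mathbf{F}_2=(\mathbf{Z}-\mathbf{L}_1\mathbf{V})\mathbf{V}^{-1}$, where $\mathbf{Z}$ is built as in Step (6) of Algorithm 1.
\end{itemize}
The block-row computation in the proof of Theorem 1 then shows $(\mathbf{L}+\mathbf{B}\mathbf{F}_2)\hat{\mathbf{v}}_k=\lambda_k\hat{\mathbf{v}}_k$ for every column. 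So all $n$ open-loop eigenvalues are preserved and $-\lambda_p$ has eigenvector $\hat{\mathbf{v}}_p$ with $\mathbf{C}_A\hat{\mathbf{v}}_p=\mathbf{0}$. The PBH test then makes $(-\lambda_p)$ unobservable for $(\mathbf{C}_A,-(\mathbf{L}+\mathbf{B}\mathbf{F}_2))$. Once $\mathbf{e}\to\mathbf{0}$, the plant evolves according to this matrix.

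The main obstacle is the invertibility of $\mathbf{V}$, which replaces Condition 3 of Theorem 1. It amounts to showing that $\hat{\mathbf{v}}_p$ (and its conjugate) can be chosen linearly independent of the remaining open-loop eigenvectors.
\begin{itemize}
\item Since $q\ge m+2$, the null space of $\mathbf{N}_4(\lambda_p)$ has dimension at least $2$. The map $\mathbf{h}\mapsto\mathbf{N}_1\mathbf{h}$ is injective, because $\mathbf{N}_1\mathbf{h}=\mathbf{0}$ forces $\mathbf{B}\mathbf{N}_2\mathbf{h}=\mathbf{0}$, hence $\mathbf{N}_2\mathbf{h}=\mathbf{0}$, contradicting the independence of the columns of $\mathbf{N}$. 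So $\hat{\mathbf{v}}_p$ ranges over a subspace of dimension at least $2$.
\item The span of the other $n-1$ eigenvectors (or $n-2$ in the complex case) is a hyperplane, or a codimension-$2$ subspace.
\item I would argue that a generic $\mathbf{h}_p$ avoids this subspace; for the complex pair, that a real-structured generic choice keeps $\hat{\mathbf{v}}_p,\bar{\hat{\mathbf{v}}}_p$ independent modulo it. This is the extra margin provided by $q\ge m+2$, as cited from \cite{al2022observability}.
\end{itemize}

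Finally, I would note that the overall closed-loop spectrum consists of the preserved open-loop spectrum together with the chosen observer poles. This completes the statement.
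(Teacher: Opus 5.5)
Your separation-principle skeleton is exactly the paper's proof: the change to $(\mathbf{x},\mathbf{e})$ coordinates, the block-triangular closed loop, observer poles placed by duality, and then $\mathbf{F}_2$ as a pure state-feedback problem. The paper obtains $\mathbf{F}_2$ simply by invoking Theorem~1 of \cite{al2022observability}. The gap is in the self-contained construction you put in place of that citation, namely the invertibility of $\mathbf{V}$.

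Because the eigenvalues are distinct, $\operatorname{span}\{\mathbf{v}_k\}_{k\neq p}=\ker\mathbf{u}_p^T$, where $\mathbf{u}_p$ is the left eigenvector of $\mathbf{L}$ for $\lambda_p$. So a generic $\mathbf{h}_p$ works if and only if $W=\mathbf{N}_1(\lambda_p)\ker\mathbf{N}_4(\lambda_p)\not\subseteq\ker\mathbf{u}_p^T$. Knowing $\dim W\ge 2$ says nothing about this, since a plane can lie inside a hyperplane. Also, $q\ge m+2$ is used in the paper only to make $\hat{\mathbf{v}}_p$ and $\bar{\hat{\mathbf{v}}}_p$ independent of each other, not to avoid this hyperplane.

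The containment really occurs. Take $n=4$, actuation nodes $\{1,2,3\}$, compromised node $4$ (so $q=m+2$), edge set $\{(4,1),(1,4),(1,2),(2,3)\}$, and
\[
\mathbf{L}=\begin{bmatrix}2&0&0&-2\\-4&4&0&0\\0&-5&5&0\\-1&0&0&1\end{bmatrix}.
\]
Its eigenvalues $\{0,3,4,5\}$ are distinct, and $(-\mathbf{L},\mathbf{B})$ is controllable because no left eigenvector is a multiple of $\mathbf{e}_4$. For $\lambda_p=3$ one has $\mathbf{u}_p=(1,0,0,-1)^T$. The unactuated fourth row of $(\mathbf{L}-3\mathbf{I})\hat{\mathbf{v}}=-\mathbf{B}\mathbf{g}$ reads $-\hat v_1-2\hat v_4=0$, so $\hat v_4=0$ forces $\hat v_1=0$. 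Hence $\mathbf{u}_p^T\hat{\mathbf{v}}=0$ for every admissible $\hat{\mathbf{v}}$, and $\mathbf{V}$ is singular for every choice of $\mathbf{h}_p$.

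This obstructs the strategy itself, not just your bookkeeping. If $(\mathbf{L}+\mathbf{B}\mathbf{F}_2)\mathbf{v}_k=\lambda_k\mathbf{v}_k$ for all $k\neq p$, then $\mathbf{F}_2$ vanishes on $\ker\mathbf{u}_p^T$, so $\mathbf{F}_2=\mathbf{g}\mathbf{u}_p^T$. The closed-loop $\lambda_p$-eigenvector must then satisfy $\mathbf{u}_p^T\hat{\mathbf{v}}\neq 0$. So in this example no gain that alters only $\mathbf{v}_p$ exists at all.

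The theorem's actual claim still holds there. The gain with rows $(1,0,0,-1)$, $(0,-1,0,0)$, $(0,0,0,0)$ gives closed-loop spectrum $\{0,3,4,5\}$ and $\lambda=3$ eigenvector $(0,2,5,0)^T$. Proving it in general, though, requires letting other eigenvectors move. One way:
\begin{itemize}
\item Fix any nonzero $\hat{\mathbf{v}}=\mathbf{N}_1\mathbf{h}_p\in W$, with $\hat{\mathbf{v}},\bar{\hat{\mathbf{v}}}$ independent if $\lambda_p\notin\mathbb{R}$.
\item Take a real $\mathbf{F}_0$ with $\mathbf{F}_0\hat{\mathbf{v}}=\mathbf{N}_2\mathbf{h}_p$, so that $(\mathbf{L}+\mathbf{B}\mathbf{F}_0)\hat{\mathbf{v}}=\lambda_p\hat{\mathbf{v}}$.
\item Add a feedback term vanishing on $\operatorname{span}\{\operatorname{Re}\hat{\mathbf{v}},\operatorname{Im}\hat{\mathbf{v}}\}$ to place the remaining eigenvalues at the rest of the open-loop spectrum. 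This is possible because the pair induced on the quotient by that invariant subspace is controllable.
\end{itemize}
Alternatively, simply cite the state-feedback result, as the paper does.
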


\begin{proof}
This proof primarily relies on the argument of the separation principle \cite{siljak2011decentralized,kailath2000linear} and is presented briefly. Since both the plant and observer have the same input $\mathbf{u}(t) = -\mathbf{F}_2 \hat{\mathbf{x}}(t)$, with some algebraic manipulations, we can show that the augmented dynamics can be written as
\begin{align}
\begin{bmatrix}
\dot{\mathbf{x}} \\
\dot{\mathbf{e}}
\end{bmatrix}
=
\begin{bmatrix}
-(\mathbf{L+BF}_2) & \mathbf{BF}_2 \\
\mathbf{0} & -(\mathbf{L+KC}_O)
\end{bmatrix}
\begin{bmatrix}
\mathbf{x} \\
\mathbf{e}
\end{bmatrix}. 
\end{align}
Because of the block triangular structure and the fact that the pair $(\mathbf{C}_O,-\mathbf{L})$ is observable, $\mathbf{K}$ can be designed independently to place the eigenvalues of $-(\mathbf{L+KC}_O)$ sufficiently far to the left of the imaginary axis, ensuring that the estimation error $\mathbf{e}(t)$ converges to zero rapidly. Therefore, the closed-loop dynamics become $-(\mathbf{L+BF}_2)$. Since all the conditions from Theorem 1 of \cite{al2022observability} are satisfied, we can independently design $\mathbf{F}_2$ such that the pair $(\mathbf{C}_A,\; -(\mathbf{L} + \mathbf{B}\mathbf{F}_2))$ has an unobservable mode at $(-\lambda_p)$ while preserving all the open-loop eigenvalues. 
\end{proof}

We refer the reader to \cite{siljak2011decentralized,kailath2000linear} for the design of the Luenberger observer and to \cite{al2022observability} for the design of an observability-blocking controller using full state feedback. Specifically, we need to use Algorithm 1 from \cite{al2022observability} to compute $\mathbf{F}_2$. We omit presenting the algorithm here again to avoid repetition. 


Similar to the output-feedback-based approach, the number of required actuation nodes can be reduced in Theorem 2 when there exists a vertex cutset that separates the actuation nodes from the compromised nodes. The following corollary formalizes the result.

\begin{corollary}
Consider the network synchronization model \eqref{eq:dynamics1}. Suppose that there is a cutset $\mathcal{V}_{cut}$ which separates the graph into two partitions $\mathcal{V}_1$ and $\mathcal{V}_2$ in such a way that $\mathcal{V}_1$ does not include any compromised nodes and $\mathcal{V}_2$ does not include any actuation nodes. If the selected $\lambda_{p}$ is not an eigenvalue of the block $\mathbf{L}_{\mathcal{V}_2\mathcal{V}_2}$ from the Laplacian matrix $\mathbf{L}$, then the number of actuation nodes required to design the observability-blocking controller in Theorem 2 can be reduced to $q \geq |\mathcal{V}_{cut}|+2$ nodes by designing a controller that makes the mode $(-\lambda_{p})$ unobservable for measurements taken at the nodes in $\mathcal{V}_{cut}$.
\end{corollary}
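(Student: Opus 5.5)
The plan is to treat the observer and the controller separately, as in the proof of Theorem 2, and then transfer the cutset reduction of Corollary 1 (i.e., Lemma 1 and Theorem 3 of \cite{al2022observability}) to the state-feedback gain $\mathbf{F}_2$.

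First, use the augmented error dynamics from the proof of Theorem 2. The block-triangular structure shows that the closed-loop spectrum is the union of the spectrum of $-(\mathbf{L}+\mathbf{B}\mathbf{F}_2)$ and the spectrum of $-(\mathbf{L}+\mathbf{K}\mathbf{C}_O)$. The observer gain $\mathbf{K}$ is designed from observability of $(\mathbf{C}_O,-\mathbf{L})$ alone, so it is unaffected by how many actuation nodes are used. It therefore suffices to show that, with only $q\geq|\mathcal{V}_{cut}|+2$ actuation nodes, we can construct $\mathbf{F}_2$ so that $(\mathbf{C}_A,-(\mathbf{L}+\mathbf{B}\mathbf{F}_2))$ has $(-\lambda_p)$ as an unobservable mode.

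Second, build $\mathbf{F}_2$ using Algorithm 1 of \cite{al2022observability}, with the cutset nodes playing the role of the compromised nodes. Concretely, let $\mathbf{C}_{cut}$ be the indicator output matrix of $\mathcal{V}_{cut}$. Since $q\geq|\mathcal{V}_{cut}|+2$, the null-space argument of Theorem 1 (Step 3) gives a nonzero $\mathbf{h}_p$ with $\mathbf{N}_4\mathbf{h}_p=\mathbf{0}$, where $\mathbf{N}_4$ now collects the rows indexed by $\mathcal{V}_{cut}$. This yields a closed-loop eigenvector $\hat{\mathbf{v}}_p$ with $\mathbf{C}_{cut}\hat{\mathbf{v}}_p=\mathbf{0}$; when $\lambda_p$ is complex, the extra degree of freedom also gives a conjugate vector independent of $\hat{\mathbf{v}}_p$. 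Theorem 1 of \cite{al2022observability} then guarantees that all open-loop eigenvalues are preserved.

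Third, and this is the key step, show that vanishing on $\mathcal{V}_{cut}$ forces vanishing on $\mathcal{V}_2$, and in particular on the compromised nodes. Order the nodes as $(\mathcal{V}_1,\mathcal{V}_{cut},\mathcal{V}_2)$. Because $\mathcal{V}_{cut}$ separates the graph, there are no edges between $\mathcal{V}_1$ and $\mathcal{V}_2$, so $\mathbf{L}_{\mathcal{V}_2\mathcal{V}_1}=\mathbf{0}$. Because $\mathcal{V}_2$ contains no actuation nodes, $\mathbf{B}$ has zero rows on $\mathcal{V}_2$. Taking the $\mathcal{V}_2$ block row of $(\mathbf{L}+\mathbf{B}\mathbf{F}_2)\hat{\mathbf{v}}_p=\lambda_p\hat{\mathbf{v}}_p$ therefore gives
\[
(\mathbf{L}_{\mathcal{V}_2\mathcal{V}_2}-\lambda_p\mathbf{I})\,\hat{\mathbf{v}}_{p,\mathcal{V}_2}=-\mathbf{L}_{\mathcal{V}_2\mathcal{V}_{cut}}\,\hat{\mathbf{v}}_{p,\mathcal{V}_{cut}}=\mathbf{0}.
\]
Since $\lambda_p$ is not an eigenvalue of $\mathbf{L}_{\mathcal{V}_2\mathcal{V}_2}$, this forces $\hat{\mathbf{v}}_{p,\mathcal{V}_2}=\mathbf{0}$. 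Every compromised node lies in $\mathcal{V}_{cut}\cup\mathcal{V}_2$, so $\mathbf{C}_A\hat{\mathbf{v}}_p=\mathbf{0}$, and by the PBH test $(-\lambda_p)$ is unobservable for the adversary.

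The main obstacle is making sure $\hat{\mathbf{v}}_p$ really is a valid closed-loop eigenvector with the prescribed zero pattern on $\mathcal{V}_{cut}$. This requires the null-space matrix $\mathbf{N}_4$ restricted to the $|\mathcal{V}_{cut}|$ cutset rows to have a nontrivial kernel (and a two-dimensional one in the complex case), which is exactly what $q\geq|\mathcal{V}_{cut}|+2$ provides, given controllability of $(-\mathbf{L},\mathbf{B})$. This is the same fact used in Corollary 1, so I would cite Lemma 1 and Theorem 3 of \cite{al2022observability} for it, noting that the argument there concerns only the eigenvector equation for $\mathbf{F}_2$ and is unaffected by the observer.
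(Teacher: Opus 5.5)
Your proposal is correct and follows essentially the same route as the paper. The paper gives no written proof of this corollary: it relies on the separation argument of Theorem 2 to decouple $\mathbf{K}$ from $\mathbf{F}_2$, and then invokes Lemma 1 and Theorem 3 of \cite{al2022observability} with the cutset nodes playing the role of the compromised nodes. The only difference is that you carry out the cutset step explicitly through the $\mathcal{V}_2$ block row of the closed-loop eigenvector equation. That step uses $\mathbf{L}_{\mathcal{V}_2\mathcal{V}_1}=\mathbf{0}$, the zero rows of $\mathbf{B}$ on $\mathcal{V}_2$, and invertibility of $\mathbf{L}_{\mathcal{V}_2\mathcal{V}_2}-\lambda_p\mathbf{I}$, which is exactly the content the paper delegates to that citation.
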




We highlight several advantages of the observer-based approach over the output-feedback-based approach. First, unlike output feedback, the observer-based approach preserves all eigenvalues of the open-loop system. One might expect that the closed-loop dynamics become unobservable only after the estimation error $\mathbf{e}(t)$ converges to zero. However, in practice, the estimation error effectively acts as an unknown external input to the closed-loop dynamics. As a result, the system remains unobservable to the adversary at all times. 

Second, by comparing Condition 3 in Theorems 1 and 2, we observe that the observer-based design does not require the closed-loop model $(\mathbf{C}_O, -(\mathbf{L} + \mathbf{B}\mathbf{F}_2))$ to remain observable, unlike the output-feedback-based design. In particular, when the closed-loop system becomes unobservable at the sensor nodes while blocking observability at the compromised nodes, the output-feedback-based method fails, as stated in Corollary 2. In contrast, the observer-based approach remains applicable, provided that the open-loop model $(\mathbf{C}_O, -\mathbf{L})$ is observable. To illustrate this, consider the extreme case where the sensor nodes and compromised nodes coincide, i.e., $\mathbf{C}_A = \mathbf{C}_O = \mathbf{C}$. The observer-based method remains applicable as long as all conditions in Theorem 2 are satisfied. This is because the estimation error dynamics for an observer constructed by the operator are governed by the state matrix $-(\mathbf{L} + \mathbf{K}\mathbf{C})$, whereas the estimation error dynamics for an observer constructed by the adversary are governed by the state matrix $-(\mathbf{L} + \mathbf{B}\mathbf{F}_2 + \mathbf{K}\mathbf{C})$. This difference arises because the plant uses the control input $\mathbf{u} = -\mathbf{F}_2 \hat{\mathbf{x}}$, where $\hat{\mathbf{x}}$ is the operator’s state estimate. Consequently, the observer-based approach enables greater flexibility in the placement of sensor nodes, which solely relies on the open-loop model. 

\noindent{Several other remarks on this approach have been made below:}

\begin{enumerate}


\item Similar to the output-feedback-based approach, if the eigenvalues of $\mathbf{L}$ are all real, we require $q = m + 1$ actuation nodes to design the observability-blocking controller. Accordingly, the number of actuation nodes specified in Corollary~3 also reduces to $q= |\mathcal{V}_{cut}|+1$.


\item The presented design method is general, in that it does not depend on the state matrix being Laplacian, nor on the specific graph topology. Therefore, this design holds for any linear-time invariant dynamics as long as the conditions given in Theorem 2 are satisfied.


\item Another important advantage of the observer-based approach is that it provides a framework for deriving distributed solutions to the observability-blocking control problem. We briefly discuss this in the next section.

\end{enumerate}

\subsection{A Distributed Framework}

In this section, we extend the observer-based approach to develop a distributed control framework for observability blocking. In this framework, each actuation node has access only to its own state and those of its neighbors, and uses this information to design its control input $\mathbf{u}_i(t)$. Formally, we define the observation model of each actuation node $i$ as
\begin{align}
\mathbf{y}_{Oi}(t) &= \mathbf{C}_{Oi}~\mathbf{x}(t).\label{eq:dynamics4}
\end{align}
$\mathbf{C}_{Oi}$ is the output matrix for the node $i$ defined as $\mathbf{C}_{Oi} = \begin{bmatrix} \mathbf{e}_{i_1}, \mathbf{e}_{i_2}, \ldots,\mathbf{e}_{i_{|\mathcal{N}(i)|+1}}\end{bmatrix}^T$ where $\{i_1, i_2, \cdots, i_{|\mathcal{N}(i)|+1}\}= \{i\} \cup \mathcal{N}(i)$. In this distributed approach, each actuation node $i$ employs an observer given by the equation
\begin{align}
\dot{\hat{\mathbf{x}}}_i(t)
    &= -\mathbf{L}\hat{\mathbf{x}}_i(t) + \mathbf{B}\mathbf{u}(t)
       + \mathbf{K}_i\big( \mathbf{y}_{Oi}(t) - \mathbf{C}_{Oi}\hat{\mathbf{x}}_i(t) \big), \label{eq:dist_observer} 
\end{align}
where $\mathbf{K}_i$ is the gain matrix of the observer at actuation node $i$. Each actuation node $i$ for $i=1, 2, \hdots, q$ then applies the control input specified as
$u_{i}=\mathbf{f}_{i}^T\hat{\mathbf{x}}_i$ where $\mathbf{f}_{i} \in \mathbb{R}^n$ is the control gain for state-feedback at node $i$. 

We assume that the actuation nodes can either communicate their applied actuation signal $u_{i}(t)$ or their estimates $\hat{\mathbf{x}}_i$ among themselves, so that the term $\mathbf{u}(t)=[u_1(t)~ u_2(t)~\cdots u_q(t)]^T$ in \eqref{eq:dist_observer} is identical for each observer. This communication assumption is vital for the separation principle to hold in this distributed setting. Now, if the pair $(\mathbf{C}_{Oi},-\mathbf{L})$ is observable for all the actuation nodes $i=1, 2, \cdots, q$, then each actuation node can design its own observer gain matrix $\mathbf{K}_i$ using any standard pole-placement techniques that place the eigenvalues of $-(\mathbf{L}+\mathbf{K}_i\mathbf{C}_{Oi})$ sufficiently far left of the imaginary axis to make the estimation errors become zero quickly. Thereupon, the controller gain for all the actuation nodes can be obtained by finding the gain matrix $\mathbf{F}$ using the algorithm given in \cite{al2022observability}, where $\mathbf{F}=[\mathbf{f}_{1} ~\mathbf{f}_{2} \cdots \mathbf{f}_{q}]^T$. We formalize the outcome of this solution as follows.

\vspace{10pt}
\begin{theorem}
Consider the network synchronization model \eqref{eq:dynamics1}. Assume that: 1) the eigenvalues of $\mathbf{L}$ are distinct; 2) $q \geq m + 2$; 3) the pair $(\mathbf{C}_{Oi},-\mathbf{L})$ is observable for actuation nodes $i=1, 2, \cdots, q$; 4) the pair $(-\mathbf{L}, \mathbf{B})$ is controllable; and 5) all actuation nodes can communicate with each other.
Then, the state observer given by the gain matrix $\mathbf{K}_i$ and the state feedback gain $\mathbf{f}_i$ for all actuation nodes can be designed independently such that the estimation error converges to zero rapidly and observability of the closed-loop model is blocked at the compromised nodes. Specifically, the pair $(\mathbf{C}_A,\; -(\mathbf{L} + \mathbf{B}\mathbf{F}))$ has an unobservable mode at $(-\lambda_p)$ where $\mathbf{F}=[\mathbf{f}_{1} ~\mathbf{f}_{2} \cdots \mathbf{f}_{q}]^T$ and $\lambda_p$ is any selected eigenvalue of $\mathbf{L}$. Furthermore, all open-loop eigenvalues will be preserved. 
\end{theorem}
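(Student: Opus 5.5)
The proof will mirror that of Theorem 2, extended to $q$ parallel observers. First I write the closed loop in augmented coordinates. For each actuation node $i$, define the local estimation error $\mathbf{e}_i(t)=\mathbf{x}(t)-\hat{\mathbf{x}}_i(t)$. By assumption 5, every observer \eqref{eq:dist_observer} is driven by the same input vector $\mathbf{u}(t)$ that acts on the plant. Subtracting \eqref{eq:dist_observer} from \eqref{eq:dynamics1} and using $\mathbf{y}_{Oi}=\mathbf{C}_{Oi}\mathbf{x}$ gives
\begin{align*}
\dot{\mathbf{e}}_i=-(\mathbf{L}+\mathbf{K}_i\mathbf{C}_{Oi})\mathbf{e}_i .
\end{align*}
This error equation is decoupled from $\mathbf{x}$ and from every other $\mathbf{e}_j$.

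Next, the applied input is $u_i=\mathbf{f}_i^T\hat{\mathbf{x}}_i=\mathbf{f}_i^T(\mathbf{x}-\mathbf{e}_i)$. Taking the sign convention $\mathbf{u}=-\mathbf{F}\hat{\mathbf{x}}$ used in Theorem 2 (equivalently, absorbing the sign into $\mathbf{f}_i$), I write
\begin{align*}
\mathbf{B}\mathbf{u}=-\mathbf{B}\mathbf{F}\mathbf{x}+\sum_{i=1}^q \mathbf{e}_i\mathbf{f}_i^T\mathbf{e}_i ,
\end{align*}
where $\mathbf{B}=[\mathbf{e}_1\cdots\mathbf{e}_q]$ under the labeling convention, so the input at node $i$ enters through $\mathbf{e}_i$.

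The state $[\mathbf{x}^T~\mathbf{e}_1^T\cdots\mathbf{e}_q^T]^T$ then evolves under a block upper-triangular matrix. Its first diagonal block is $-(\mathbf{L}+\mathbf{B}\mathbf{F})$, its remaining diagonal blocks are $-(\mathbf{L}+\mathbf{K}_i\mathbf{C}_{Oi})$, and its first block row carries the coupling terms $\mathbf{e}_i\mathbf{f}_i^T$. The spectrum of the augmented system is therefore the union of the spectra of the diagonal blocks, and the gains $\mathbf{K}_i$ and $\mathbf{F}$ can be designed independently.

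The design then proceeds in two independent steps. First, by assumption 3 each pair $(\mathbf{C}_{Oi},-\mathbf{L})$ is observable. Each node therefore chooses $\mathbf{K}_i$ by pole placement so that $-(\mathbf{L}+\mathbf{K}_i\mathbf{C}_{Oi})$ is Hurwitz with eigenvalues far to the left, and every $\mathbf{e}_i\to\mathbf{0}$ rapidly. Second, assumptions 1, 2 and 4 are exactly the hypotheses of Theorem 1 of \cite{al2022observability}, as already used in the proof of Theorem 2. Hence Algorithm 1 of \cite{al2022observability} yields $\mathbf{F}$ such that $-(\mathbf{L}+\mathbf{B}\mathbf{F})$ has all open-loop eigenvalues and has eigenvector $\hat{\mathbf{v}}_p$ with $\mathbf{C}_A\hat{\mathbf{v}}_p=\mathbf{0}$. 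By the PBH test, the mode $(-\lambda_p)$ is then unobservable for $(\mathbf{C}_A,-(\mathbf{L}+\mathbf{B}\mathbf{F}))$. Setting $\mathbf{f}_i^T$ equal to the $i$-th row of $\mathbf{F}$ gives the distributed gains.

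The main obstacle is the coupling step: justifying that the separation structure survives when each node uses a different estimate. This is exactly where assumption 5 enters. Without a common $\mathbf{u}(t)$ in every observer, $\dot{\mathbf{e}}_i$ would contain terms in the other $\mathbf{e}_j$ and the block-triangular form would fail. I would write this cancellation out explicitly. I would also remark, as after Theorem 2, that the $\mathbf{e}_i$ terms act only as a decaying input to the $\mathbf{x}$-dynamics, so the closed-loop state matrix seen by the adversary is $-(\mathbf{L}+\mathbf{B}\mathbf{F})$ and the unobservable mode persists.
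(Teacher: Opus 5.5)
Your proposal is correct and follows the paper's proof. It uses the same block upper-triangular augmented matrix in $[\mathbf{x}^T~\mathbf{e}_1^T\cdots\mathbf{e}_q^T]^T$, made possible by the shared input $\mathbf{u}(t)$, and then designs each $\mathbf{K}_i$ by pole placement and $\mathbf{F}$ via the cited Theorem 1/Algorithm 1 independently, exactly as in Theorem 2. You write out the error dynamics and coupling terms that the paper leaves implicit; just rename the columns of $\mathbf{B}$ so they do not clash with the error vectors $\mathbf{e}_i$.
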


\begin{proof}
This proof follows an argument similar to that of Theorem 2. Since both the plant and all observers share the same input
$\mathbf{u}(t) = [u_1(t)~ u_2(t)~\cdots~ u_q(t)]^T$, the system matrix of the augmented dynamics for 
$[\mathbf{x}^T~ \mathbf{e}_1^T~\cdots~\mathbf{e}_q^T]^T$ can be written as
\begin{align}
\begin{bmatrix}
-(\mathbf{L}+\mathbf{B}\mathbf{F}) & * & \cdots & *\\
0 & -(\mathbf{L} + \mathbf{K}_1 \mathbf{C}_{O1}) & \cdots & 0 \\
\vdots & \vdots & \ddots & \vdots \\
0 & 0 & \cdots & -(\mathbf{L}+ \mathbf{K}_q \mathbf{C}_{Oq})
\end{bmatrix} \nonumber,
\end{align}
where $\mathbf{e}_i = \mathbf{x} - \hat{\mathbf{x}}_i$ for $i = 1,2,\dots,q$. From the block structure of the above matrix and by using the same argument as in Theorem 2, the result follows. 
\end{proof}

We remark that the gain matrix 
$\mathbf{F} = [\mathbf{f}_{1} ~ \mathbf{f}_{2} ~ \cdots ~ \mathbf{f}_{q}]^{T}$ 
obtained in this distributed approach is identical to the matrix 
$\mathbf{F}_2$ derived for the observer-based model under the same network setting and the same selection of eigenvalues and eigenvectors. Hence, both the centralized and distributed approaches yield the same controller gain matrix. We omit the presentation of the algorithm for computing this gain matrix here to avoid repetition.

Similar to the previous approaches, the number of required actuation nodes in Theorem 3 can be reduced to $q \geq |\mathcal{V}_{cut}| + 1$ when there exists a vertex cutset $\mathcal{V}_{cut}$ that separates the actuation nodes from the compromised nodes. We omit a formal statement of this result due to space constraints and to avoid repetition. Moreover, the number of required actuation nodes can be reduced by one if all eigenvalues of $\mathbf{L}$ are real.

Condition 5 above is not overly restrictive when the selected actuation nodes are neighbors. In such a case, they already communicate and share their state information. In addition, they only need to share their applied actuation signals $u_i(t)$ or their state estimates $\hat{\mathbf{x}}_i$. Otherwise, the separation principle does not hold, meaning that the observer and controller cannot be designed independently. 

We note that it is possible for the pair $(\mathbf{C}_{Oi},-\mathbf{L})$ to not be individually observable for all actuation nodes $i=1, 2, \cdots, q$, but it is collectively observable in the sense that  $([\mathbf{C}_{O1}^T~\mathbf{C}_{O2}^T~\cdots~\mathbf{C}_{Or}^T]^T,-\mathbf{L})$ is observable. In that case, each observer needs to update its equation as 
\begin{align}
\dot{\hat{\mathbf{x}}}_i(t)
    = -\mathbf{L}\hat{\mathbf{x}}_i(t) + \mathbf{B}\mathbf{u}(t)
       &+ \mathbf{K}_i\big( \mathbf{y}_{Oi}(t) - \mathbf{C}_{Oi}\hat{\mathbf{x}}_i(t)) \nonumber \\  &+ \sum_{j \in \mathcal{N}'_i} a_{ij}  (\hat{\mathbf{x}}_j - \hat{\mathbf{x}}_i), 
\end{align}
where $\mathcal{N}'_i$ denotes the set of all other actuation nodes with which node $i$ can communicate.  Given that this communication graph is a spanning tree on the network graph $\mathcal{G}$ and $a_{ij} > 0$, the consensus term 
$\sum_{j \in \mathcal{N}'_i} a_{ij} (\hat{\mathbf{x}}_j - \hat{\mathbf{x}}_i)$ 
drives the state estimates to converge to the true value, provided that the matrix $-(\mathbf{I}_n \otimes \mathbf{L} + \mathrm{diag}(\mathbf{K}_i \mathbf{C}_{oi}) + \mathcal{A} \otimes \mathbf{I})$
is Hurwitz \cite{han2018simple,wang2017distributed}. Here, $\otimes$ denotes the Kronecker product, and $\mathcal{A}$ denotes the Laplacian matrix associated with $\{a_{ij}\}$. We refer the reader to \cite{han2018simple,wang2017distributed} for further details. Unlike the case in Theorem 3, the observers must now be designed jointly, either heuristically or via linear matrix inequality (LMI) techniques. As the estimation error converges to zero, the closed-loop dynamics reduce to the state matrix $-(\mathbf{L} + \mathbf{B}\mathbf{F})$, where $\mathbf{F}$ can be designed in the same manner using the algorithm presented in \cite{al2022observability}. We leave the development of rigorous and provable algorithms for a fully distributed solution to the observability blocking problem in future work.

\section {Simulation Results}

In this section, we present numerical simulations that illustrate the effectiveness of the proposed solutions in blocking observability of a selected mode at specified compromised nodes. We consider a network model of 11 nodes, as shown in Fig. 1. The graph is undirected, and each edge is assigned a weight of $1$.

\begin{figure} [htbp]
    \centering
    \includegraphics[width=0.9\linewidth]{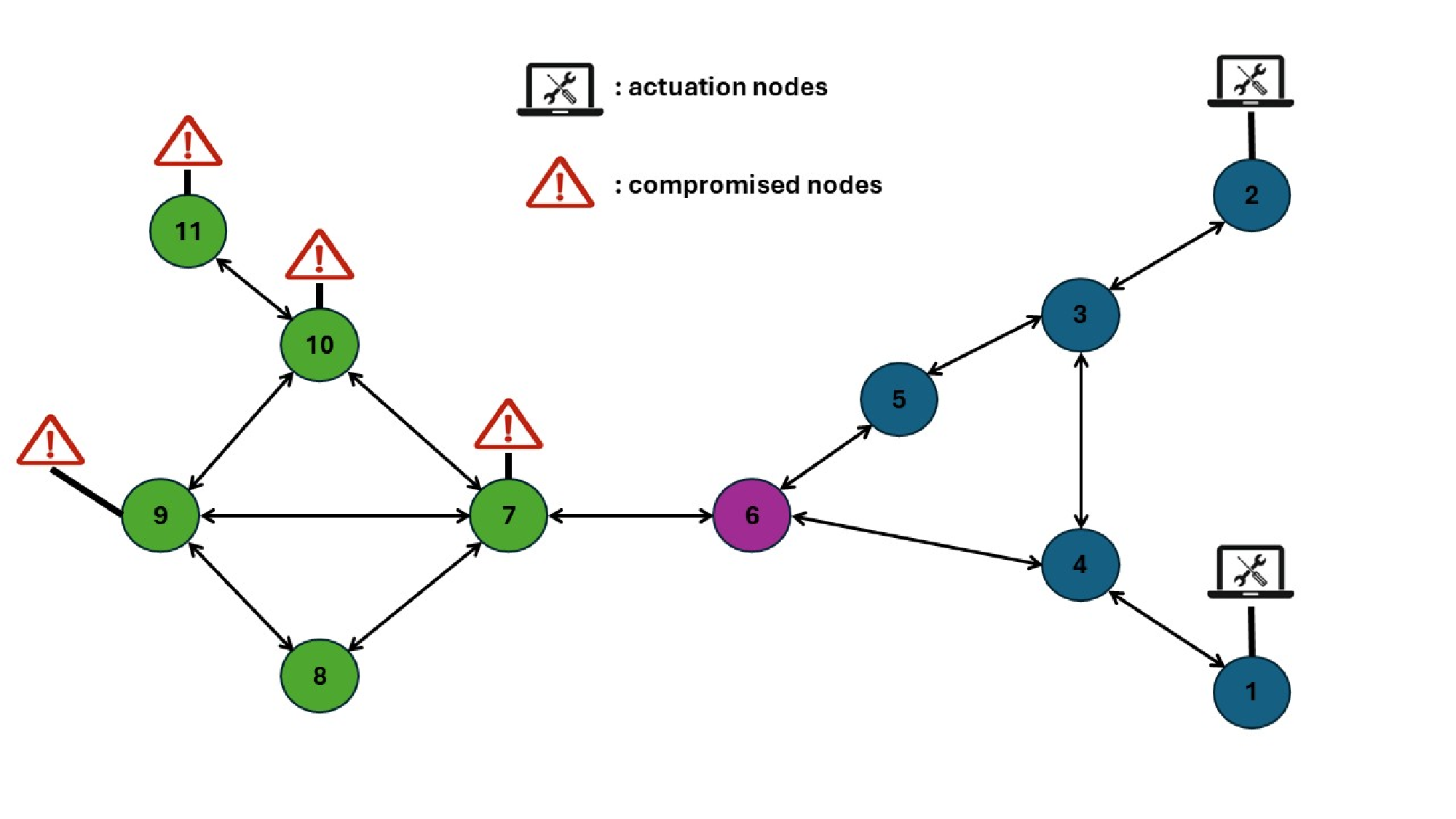}
    \caption[Model for Output Feedback Simulation]{11-node network}
    \label{fig: example network model}
\end{figure}

The actuation nodes are Nodes~$\{1, 2\}$ and the compromised nodes are Nodes~$\{7, 9, 10, 11\}$. Node 6 represents a single vertex-cutset that partitions the compromised and actuation nodes.
Therefore, if observability is blocked at Node 6, observability will also be blocked at the compromised nodes using two actuation nodes. Since all eigenvalues of $\mathbf{L}$ are real, $q = 2$ actuation nodes are sufficient to block observability at the compromised nodes, as implied by Corollaries 1 and 3. For our first simulation, we consider the case where the sensor nodes are located at Nodes $\{4, 5, 6, 7, 8\}$. We note that both $(\mathbf{C}_O, -\mathbf{L})$ and $(\mathbf{C}_A, -\mathbf{L})$ are observable, and $(-\mathbf{L}, \mathbf{B})$ is controllable.

Now, we implement Algorithm 1 to design observability-blocking controls based on output feedback. From the eigenvalues of $\mathbf{L}$, we specifically choose $\Lambda_D = [0;\ 0.1983;\ 0.6616;\ 0.8536;\ 1.2169]$ and $\lambda_{i_p} =0.1983$. The gain matrix $\mathbf{F}_1$ is obtained using Algorithm 1, which yields the controller gains at Nodes 1 and 2 as 
$[11.7216;$ $14.412;$ $-16.6171;$ $-26.6513;$ $17.1349]$ and
$[-9.7809;$ $-12.026;$ $13.8660;$ $22.2389;$ $-14.2980]$.
The corresponding eigenvector for $\lambda_{i_p}$ is now modified to $[0.6389;\ -0.7006;\ -0.25;\ 0.1388;\ -0.1388;\ 0;\ 0;\ 0;\ 0;$ $\ 0;\ 0]$ in the closed-loop system. Note that the entries of this closed-loop eigenvector corresponding to the compromised nodes are zero. Therefore, the closed-loop dynamics are now unobservable to the measurements at the compromised nodes. In fact, measurements taken at any nodes in $\mathcal{V}_{cut} \cup \mathcal{V}_2$ $=\{6, 7, 8, 9, 10, 11\}$ will find the mode $(-\lambda_{i_p})$ unobservable. Although the eigenvalues in $\Lambda_D$ are preserved, the rest of the eigenvalues are modified to different values. 

Now, consider the case where the sensor nodes are located at Nodes $\{6, 7, 8\}$. For this placement, the pair $(\mathbf{C}_O, -\mathbf{L})$ remains observable. However, Algorithm 1 now fails to compute $\mathbf{F}_1$ since the matrix $\mathbf{C}_O \mathbf{V}_h$ is not invertible. In fact, for this sensor placement, an output-feedback controller does not exist for blocking observability at the compromised nodes, as predicted by Corollary 2. However, we can still apply the observer-based method to block observability at the compromised nodes for this sensor placement. In this approach, we first build an observer using measurements obtained from the sensor nodes. Using any standard pole-placement technique (for example, MATLAB's \textit{place} function), we place the eigenvalues of the observer gain matrix $\mathbf{K}$ at $\{-1,-2, \cdots,-11\}$. The estimation error of the observer for each state is shown in Fig. 2, where the observer is initialized with zero initial conditions. As shown, the estimation error converges rapidly to zero. Although a faster convergence can be achieved by placing the eigenvalues further to the left, this would result in greater gain values for $\mathbf{K}$, which may lead to large control inputs $\mathbf{u}(t)$ due to increased transient overshoot in the estimation error. Therefore, the eigenvalues associated with $\mathbf{K}$ should be placed with caution. 

\begin{figure} [htbp]
    \centering
    \includegraphics[width=0.9\linewidth]{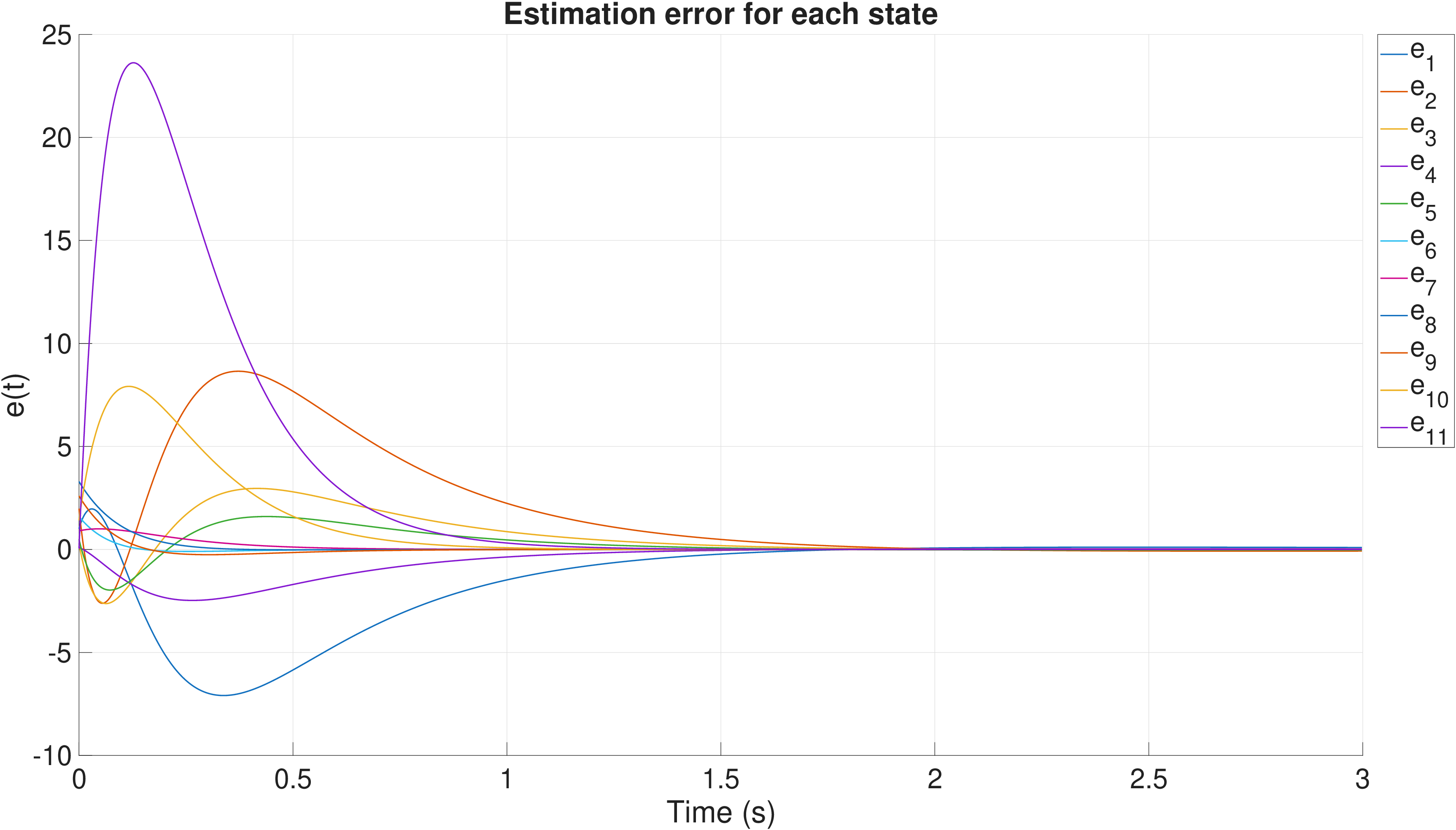}
    \caption{Evolution of estimation errors $\mathbf{e}(t) = \mathbf{x} - \hat{\mathbf{x}}(t)$ for each state}.
    \label{fig: example network model}
\end{figure}
We again choose $\lambda_p=0.1983$ and modify its corresponding eigenvector to $[0.6389;\ -0.7006;$ $\ -0.25;\ 0.1388;$ $\-0.1388;\ 0;\ 0;$ $\ 0;\ 0;\ 0;\ 0]$ to  impose unobservability. The gain matrix $\mathbf{F}_2$ is obtained using the algorithm in \cite{al2022observability}, which yields the controller gains at Nodes 1 and 2 as  $[0.8873;$ $1.0513;$ $ 0.8428;$ $0.7033;$ $0.6067;$ $0.2503;$ $-0.6088; -0.8012;$ $ -0.8347;$ $-0.9287;$ $-1.1584]$ and $[-0.7320;$ $-0.8773;$ $-0.7033;$ $-0.5869;$ $-0.5063; -0.2089;$ $0.5080;$ $0.6685;$ $0.6965;$ $0.7749;$ $0.9667]$. We note that all the open-loop eigenvalues and all the eigenvectors except the one corresponding to $\lambda_p=0.1983$ are preserved. 

We now implement the proposed distributed framework for observability blocking. Based on the graph in Fig. 1, we define the output matrix for actuation nodes 1 and 2 as $\mathbf{C}_{O1}=[\mathbf{e}_1~~ \mathbf{e}_4]^T$ and $\mathbf{C}_{O2}=[\mathbf{e}_2~~ \mathbf{e}_3]^T$, respectively. We note that both $(\mathbf{C}_{O1}, -\mathbf{L})$ and $(\mathbf{C}_{O2}, -\mathbf{L})$ are individually observable. Thus, Theorem 3 can be applied. We build two observers at actuation nodes 1 and 2. The observer estimation errors in these observers for each state are shown in Fig. 3, where the initial conditions for the observers are set to zero.

\begin{figure} [htbp]
    \centering
    \includegraphics[width=0.9\linewidth]{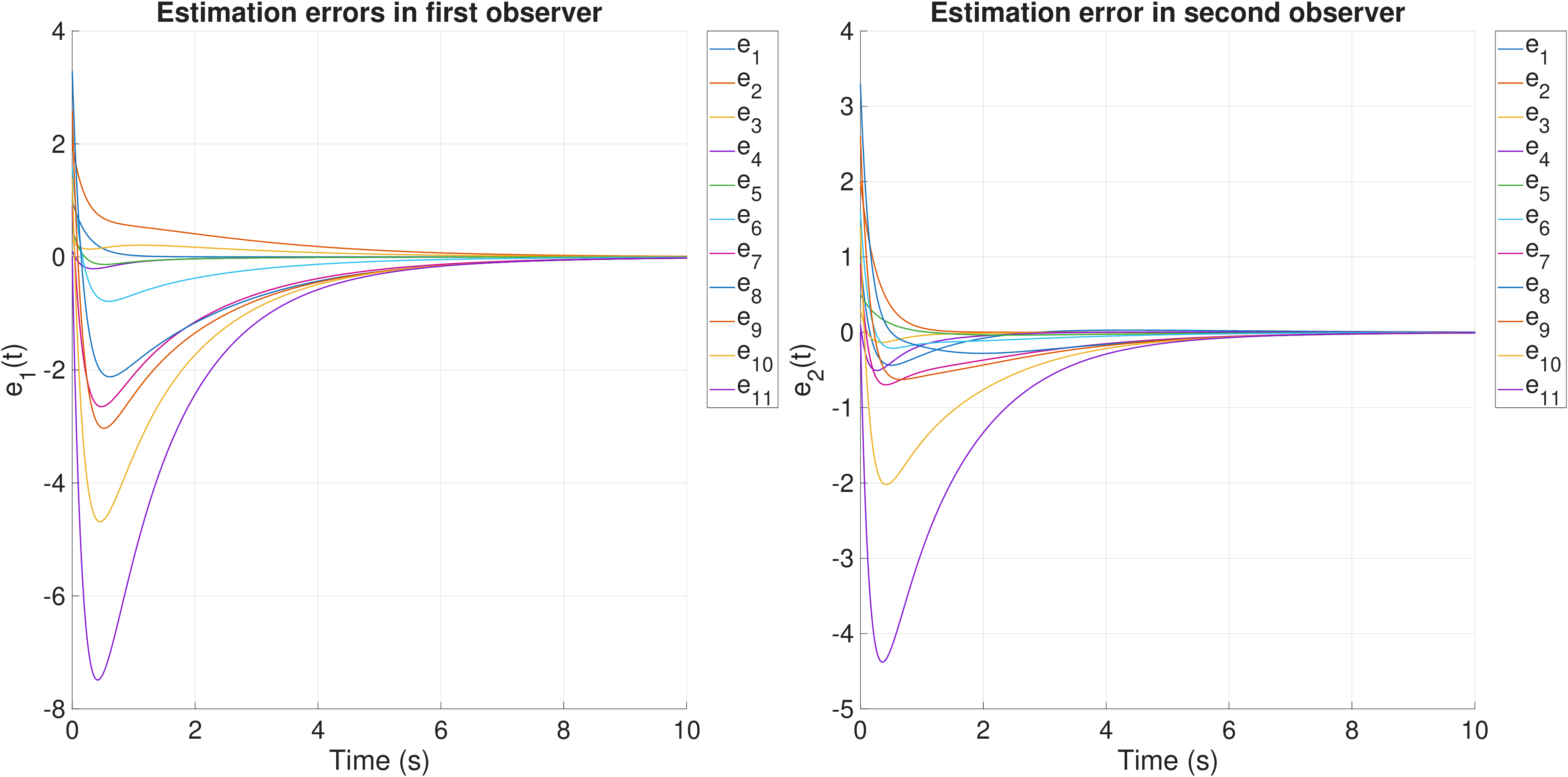}
    \caption{Evolution of estimation errors $\mathbf{e}(t) = \mathbf{x} - \hat{\mathbf{x}}(t)$ in observers at Node 1 and Node 2 for each state.}
    \label{fig: example network model2}
\end{figure}

We again choose $\lambda_p=0.1983$ and modify its corresponding eigenvector to the same vector $[0.6389;\ -0.7006;\ -0.25;\ 0.1388;\ -0.1388;\ 0;\ 0;\ 0;\ 0;\ 0]$ to impose unobservability. The controller gains at Nodes 1 and 2 are identical to the previous case with one centralized observer, as remarked in Section III-C.  

\section{Conclusion}

In this paper, we developed output-feedback and observer-based control strategies to block observability in networked systems using partial state information. The observer-based approach preserves all open-loop eigenvalues and offers greater flexibility, while the output-feedback approach provides a simpler alternative with partial eigenstructure preservation. We also extended the observer-based design to a distributed framework. 



	\bibliographystyle{ieeetr}
	\bibliography{references}

\begin{thebibliography}{10}

\bibitem{rahmani2009controllability}
A.~Rahmani, M.~Ji, M.~Mesbahi, and M.~Egerstedt, ``Controllability of multi-agent systems from a graph-theoretic perspective,'' {\em SIAM Journal on Control and Optimization}, vol.~48, no.~1, pp.~162--186, 2009.

\bibitem{pasqualetti2014controllability}
F.~Pasqualetti, S.~Zampieri, and F.~Bullo, ``Controllability metrics, limitations and algorithms for complex networks,'' {\em IEEE Transactions on Control of Network Systems}, vol.~1, no.~1, pp.~40--52, 2014.

\bibitem{summers2015submodularity}
T.~H. Summers, F.~L. Cortesi, and J.~Lygeros, ``On submodularity and controllability in complex dynamical networks,'' {\em IEEE Transactions on Control of Network Systems}, vol.~3, no.~1, pp.~91--101, 2015.

\bibitem{roy2016sensor}
S.~Roy, J.~A. Torres, and M.~Xue, ``Sensor and actuator placement for zero-shaping in dynamical networks,'' in {\em 2016 IEEE 55th Conference on Decision and Control (CDC)}, pp.~1745--1750, IEEE, 2016.

\bibitem{li2020structural}
J.~Li, X.~Chen, S.~Pequito, G.~J. Pappas, and V.~M. Preciado, ``On the structural target controllability of undirected networks,'' {\em IEEE Transactions on Automatic Control}, vol.~66, no.~10, pp.~4836--4843, 2020.

\bibitem{paridari2017framework}
K.~Paridari, N.~O’Mahony, A.~E.-D. Mady, R.~Chabukswar, M.~Boubekeur, and H.~Sandberg, ``A framework for attack-resilient industrial control systems: Attack detection and controller reconfiguration,'' {\em Proceedings of the IEEE}, vol.~106, no.~1, pp.~113--128, 2017.

\bibitem{sridhar2011cyber}
S.~Sridhar, A.~Hahn, and M.~Govindarasu, ``Cyber--physical system security for the electric power grid,'' {\em Proceedings of the IEEE}, vol.~100, no.~1, pp.~210--224, 2011.

\bibitem{xue2014security}
M.~Xue, W.~Wang, and S.~Roy, ``Security concepts for the dynamics of autonomous vehicle networks,'' {\em Automatica}, vol.~50, no.~3, pp.~852--857, 2014.

\bibitem{zhang2023observability}
Y.~Zhang, R.~Cheng, and Y.~Xia, ``Observability blocking for functional privacy of linear dynamic networks,'' in {\em 2023 62nd IEEE Conference on Decision and Control (CDC)}, pp.~7469--7474, IEEE, 2023.

\bibitem{zhang2022privacy}
J.~Zhang, J.~Lu, and X.~Chen, ``Privacy-preserving average consensus via edge decomposition,'' {\em IEEE Control Systems Letters}, vol.~6, pp.~2503--2508, 2022.

\bibitem{al2022observability}
A.~Al~Maruf and S.~Roy, ``Observability-blocking control using sparser and regional feedback for network synchronization processes,'' {\em Automatica}, vol.~146, p.~110586, 2022.

\bibitem{tran2025observability}
J.~D. Tran and A.~Al~Maruf, ``Observability-blocking controls for double-integrator and higher order integrator networks,'' in {\em 2025 American Control Conference (ACC)}, pp.~3279--3285, IEEE, 2025.

\bibitem{anguluri2025mode}
R.~Anguluri and A.~Al~Maruf, ``Mode participation and inter-area-observability blocking controllers for power networks,'' in {\em 2025 IEEE Conference on Control Technology and Applications (CCTA)}, pp.~762--767, IEEE, 2025.

\bibitem{al2019observability}
A.~Al~Maruf and S.~Roy, ``Observability-blocking controllers for network synchronization processes,'' in {\em 2019 American Control Conference (ACC)}, pp.~2066--2071, IEEE, 2019.

\bibitem{zhang2017distributed}
D.~Zhang, S.~K. Nguang, and L.~Yu, ``Distributed control of large-scale networked control systems with communication constraints and topology switching,'' {\em IEEE Transactions on Systems, Man, and Cybernetics: Systems}, vol.~47, no.~7, pp.~1746--1757, 2017.

\bibitem{srinathkumar2003eigenvalue}
S.~Srinathkumar, ``Eigenvalue/eigenvector assignment using output feedback,'' {\em IEEE Transactions on Automatic Control}, vol.~23, no.~1, pp.~79--81, 2003.

\bibitem{andry2007eigenstructure}
A.~N. Andry, E.~Y. Shapiro, and J.~Chung, ``Eigenstructure assignment for linear systems,'' {\em IEEE transactions on aerospace and electronic systems}, no.~5, pp.~711--729, 2007.

\bibitem{rugh1996linear}
W.~J. Rugh, {\em Linear system theory}.
\newblock Prentice-Hall, Inc., 1996.

\bibitem{siljak2011decentralized}
D.~D. Siljak, {\em Decentralized control of complex systems}.
\newblock Courier Corporation, 2011.

\bibitem{kailath2000linear}
T.~Kailath, A.~H. Sayed, and B.~Hassibi, {\em Linear estimation}.
\newblock Prentice Hall, 2000.

\bibitem{chen1984linear}
C.-T. Chen, {\em Linear system theory and design}.
\newblock Saunders college publishing, 1984.

\bibitem{han2018simple}
W.~Han, H.~L. Trentelman, Z.~Wang, and Y.~Shen, ``A simple approach to distributed observer design for linear systems,'' {\em IEEE Transactions on Automatic Control}, vol.~64, no.~1, pp.~329--336, 2018.

\bibitem{wang2017distributed}
L.~Wang and A.~S. Morse, ``A distributed observer for a time-invariant linear system,'' {\em IEEE Transactions on Automatic Control}, vol.~63, no.~7, pp.~2123--2130, 2017.

\end{thebibliography}

\end{document}